\numberwithin{equation}{section}
\newcommand{\quotec}[1]{``#1"}
\newcommand{\E}[0]{\ensuremath{\mathtt{E}}-}
\newcommand{\hmc}[1]{$\mathcal{H}_#1$}
\newtheorem{proposition}{Proposition}
\newtheorem{theorem}{Theorem}
\theoremstyle{plain}
\newtheorem{definition}{Definition}
\DeclareMathOperator*{\argmin}{arg\,min}
\newcommand{\gro}{\ensuremath{\text{\sc gro}}}
\newcommand{\commentout}[1]{}
\newcommand{\trv}[2]{\ensuremath{#1^{(#2)}}}
\newcommand{\brv}[2]{\ensuremath{#1_{(#2)}}}
\newcommand{\cH}{\ensuremath{\mathcal{H}}}
\newcommand{\cW}{\ensuremath{\mathcal{W}}}
\newcommand{\cY}{\ensuremath{\mathcal{Y}}}
\newcommand{\naturals}{\ensuremath{\mathbf N}}
\newcommand{\reals}{\ensuremath{\mathbf R}}
\newcommand{\regret}{\ensuremath{\textsc{regret}}}
\newtheorem{examplehidden}{Example}
\newenvironment{example}{
\begin{examplehidden}
\em
}
{\end{examplehidden}}
\begin{document}
\bibliographystyle{apalike}

\def\spacingset#1{\renewcommand{\baselinestretch}%
{#1}\small\normalsize} \spacingset{1}


\title{\bf  Generic E-Variables for Exact Sequential $k$-Sample Tests that allow for Optional Stopping}
\author{Rosanne Turner\textsuperscript{a,b}\thanks{
\textsuperscript{a}CWI, Amsterdam, part of NWO-I \textsuperscript{b}University Medical Center Utrecht, Brain Center, \textsuperscript{c}University of Amsterdam, Department of Psychology and \textsuperscript{d}Leiden University, Department of Mathematics\\ Declarations of interest: none. *Corresponding author:
Rosanne Turner {\tt (Rosanne.Turner@cwi.nl)}
National Research Institute for mathematics and computer science in the Netherlands (CWI),
Science Park 123, 1098 XG Amsterdam,
The Netherlands
}, Alexander Ly\textsuperscript{a,c} and Peter Gr\"unwald\textsuperscript{a,d}}
\maketitle

\newpage
\paragraph{Abstract}
We develop \E variables for testing whether two or more data streams come from the same source or not, and more generally, whether the difference between the sources is larger than some minimal effect size. These \E variables lead to exact, nonasymptotic tests that remain safe, i.e. keep their type-I error guarantees, under flexible sampling scenarios such as optional stopping and continuation.
In special cases our \E variables also have an optimal `growth' property under the alternative. While the construction is generic, we illustrate it through the special case of $k \times 2$ contingency tables, where we also allow for the incorporation of different restrictions on a composite alternative. Comparison to p-value analysis in simulations and a real-world example show that \E variables, through their flexibility, often allow for early stopping of data collection  --- thereby retaining similar power as classical methods --- while also retaining the option of extending or combining data afterwards.

\paragraph{Keywords} E-values, Hypothesis testing, Sequential test, Type-I error control, Composite hypothesis, Test martingale
\newpage

\spacingset{1} 

\section{Introduction}
We develop hypothesis tests that are robust under flexible sampling scenarios, in which one is allowed to engage in optional continuation and optional stopping. We focus on the setting with data coming from several groups, the goal being to test whether the underlying distributions are all the same or not. Since it considerably simplifies notation and treatment, we focus on two-sample tests throughout the paper, pointing out at the relevant places how to extend our results to the $k$-sample setting for $k > 2$.
%
Our methods are based on  \E-variables and test martingales. While to some extent going back as far as   \cite{darling1967confidence}, interest in these concepts has exploded only very recently, in part in relation to the ongoing replicability crisis in the applied sciences 
 \citep{howard2018uniform,ramdas2020admissible,VovkW21,shafer2020language,grunwald2019safe,pace2019likelihood,manole2021sequential,henzi2021valid}.

Thus, suppose we collect samples from two distinct groups, denoted $a$ and $b$. In both groups, data are i.i.d. and come in sequentially --- even though, as explained underneath (\ref{eq:evalstart}) below, our approach can also be fruitfully used in the fixed design case. 
We thus have two data streams, $Y_{1,a}, Y_{2,a}, \ldots$ i.i.d. $\sim P_{\theta_a}$
and $Y_{1,b}, Y_{2,b},\ldots $ i.i.d. $\sim P_{\theta_b}$ with $\theta_a, \theta_b \in \Theta$, $\{P_{\theta} : \theta \in \Theta\}$ representing some parameterized underlying family of distributions, all assumed  to have a probability density or mass function denoted by  $p_{\theta}$ on some outcome space $\cY$. We will use notation $P_{(\theta_a,\theta_b)}$ (density $p_{(\theta_a,\theta_b)}$) to represent the joint distribution of both streams.  We consider a testing scenario, in which the null hypothesis
$\cH_0$
expresses that ${\theta_a}= {\theta_b}$ and the alternative $\cH_1$ expresses that $d(\theta_a,\theta_b) > \delta$ for some divergence measure $d$ and some effect size $\delta\geq 0$.  
We design a family of tests for this scenario that preserve type-I error guarantees under optional stopping. Hence, if the level $\alpha$-test is performed and the null hypothesis holds true, the probability that the null will {\em ever\/} be rejected is bounded by $\alpha$. Our tests can be implemented, and are exact, for arbitrary $\{ P_{\theta}: \theta \in \Theta \}$ and in combination with arbitrary divergence measures $d$. To our knowledge such a general construction is entirely new. For purposes of illustration and insight we choose to apply it to a very simple, classical problem: $2 \times 2$ contingency tables, with, in Section~\ref{sec:experiments}, an extension to $k\times 2$ tables. 
As is well-known (for completeness we provide simulations demonstrating this in the supporting information), if a standard fixed-design method for this scenario, the p-value resulting from Fisher's exact test, is (ab)used with optional stopping, the type-I error blows up. In contrast, our tests retain type-I error guarantee while, due to the optional stopping, having power competitive with Fisher's p-value. In fact, in the $k \times 2$ application (but not in general) our test has a GRO\footnote{Nonstandard abbreviations: GRO: growth-rate optimal; REGROW: relative growth-rate optimality in worst-case} {\em (growth-rate optimal)\/} property, GRO being the analogue of 'optimal power' in our optional continuation setting.

Our test depends on the choice of a prior distribution on the alternative $\cH_1 = \{P_{(\theta_a,\theta_b)}: (\theta_a,\theta_b) \in \Theta_1 \}$ with  $\Theta_1 \subset \{ (\theta_a,\theta_b) : \theta_a, \theta_b \in \Theta \}$. The choice of prior does not affect the type-I error safety guarantee, 
hence it is fine, even from a frequentist point of view, if such a prior is chosen based on vague prior knowledge. Still, the prior affects how fast one will tend to reject the null if it is indeed false. For the case that no clear prior knowledge is available, one may use the  prior 
that is optimal in terms of the relative GRO criterion; again the resulting test also has good power properties. 

\paragraph{\E-Variable Perspective; Block-wise Approach; Optional Continuation}
In its simplest form, an {\em \E variable\/} is a nonnegative random variable $S$ such that under all distributions $P$ in the null hypothesis,
\begin{equation}
    \label{eq:evalstart}
    {\bf E}_{P}[S] \leq 1. 
\end{equation}
Our test works by first designing \E variables for a  {\em single block\/} of data,  and then later extending these to sequences of blocks $Y_{(1)}, Y_{(2)}, \ldots$ by multiplication. A block is a set of data  consisting of $n_a$ outcomes in group $a$ and $n_b$ outcomes in group $b$, for some pre-specified $n_a$ and $n_b$. The $n_a$ and $n_b$ used for the $j$-th block $Y_{(j)}$ are allowed to depend on past data, but they must be fixed before the first observation in block $j$ occurs (this rule can be loosened to some extent, see Section~\ref{sec:simplesimple}). 

At each point in time, the running product of block \E variables observed so far is itself an \E variable, and the random process of the products is known as a {\em test martingale}. An \E variable-based test at level $\alpha$ is then a test with, in combination with any stopping rule $\tau$, reports `reject' if and only if the product of \E values corresponding to all blocks that were observed so far and have already been completed, is larger than $1/\alpha$. The full definition of $\tau$  may, and often will, be unknown to the user --- the user only needs to get the signal to stop and can then report the product \E variable. 
A classical paired one-sample test corresponds to the special case with $n_a = n_b = 1$ and data coming in in the order $a,b,a,b,\ldots$. 

We can combine \E variables from different trials that share a common null (but may be defined relative to a different alternative) by multiplication, and still retain type-I error control. If we used p-values rather than \E variables we would have to resort to e.g. Fisher's method for combining p-values, which, in contrast to multiplication of \( e \)-values, is invalid if there is a dependency between the (decision to perform) tests. With \E variables, such dependencies pose no problems for error control. Thus, in our setting, even if the design (i.e.  $n_a$ and $n_b$) is fixed in advance and optional stopping plays no role, we might still want to use the \E variable based tests described in this paper rather than a classic p-value based approach, since it allows us to do optional continuation over many experiments/studies (essentially, doing a meta-analysis \citep{grunwald2020safe}) while keeping type-I error control. 

\E variables and test martingales are explained in more detail in Section~\ref{sec:basics} below, but 
we refer to \cite{grunwald2019safe,shafer2020language} for an extensive introduction to \E variables, their use in `optional continuation' over several studies, and their enlightening {\em betting} interpretation.
The general story that emerges from these papers as well as, for example, \citep{VovkW21,ramdas2020admissible} is that \E variables and test martingales are the `right' generalization of likelihood ratios to the case that both $\cH_0$ and $\cH_1$ can be composite and combination of data from several trials may be required. 

\paragraph{Relevance}
Even in this age of big data and huge models, simple tests for comparing two populations are still used as heavily as ever in clinical trials, psychological studies and so on --- areas heavily plagued by the {\em reproducibility crisis\/} \citep{pace2019likelihood}. In a by-now notorious questionnaire \citep{john2012measuring}, more than $55\%$ of the interviewed psychologists admitted to the practice of `adding data until the results look good'. While classical methods lose their type-I error guarantee if one does this (Figure~\ref{fig:type1Error} in Appendix~\ref{app:experiments} in the Supporting Material),  \E-value based tests allow for it, while, due to the option of stopping early, remaining competitive in terms of sample sizes needed to obtain a desired power. We illustrate the practical advantage of our test  in Section~\ref{sec:realworld} using the  recent real-world example of the SWEPIS trial which was stopped early for harm \citep{wennerholm2019induction}. Their analysis being based on a p-value (by definition designed for fixed sampling plan), the question whether there was indeed sufficient evidence available to stop early is very hard to answer, since the sampling plan was not followed so that the p-value that led them to stop was by definition incorrectly calculated. This also makes it very difficult to combine the test results with results from earlier or future data while keeping anything like error control.
We show that with our \E-value based methodology we would have obtained sufficient evidence to stop for harm after the same number of events had occurred. Additionally, 
this \E-value, even though based on a stopped trial,  can 
be effortlessly combined with \E-values from other trials while retaining error guarantees. Also, our results are of interest beyond mere testing: the \E-variables we develop in this paper can be used to obtain {\em anytime-valid confidence intervals\/} \citep{howard2018uniform} that also remain valid under optional stopping. We will report on this extension elsewhere. 

SWEPIS summarized its data as a $2 \times 2$ contingency table. In Section~\ref{proportiontestS} and \ref{sec:compositeH1} we refine our generic test to the $2 \times 2$and $k \times 2$ model. An advantage of focusing on this simple setting  is that it is arguably the simplest and clearest example in which there is a nuisance parameter (the proportion under the null) that  does not admit a group invariance. Nuisance parameters that satisfy such an invariance (such as the variance in the $t$-test, or the grand mean in the two-sample $t$-test) are quite straightforward to turn into \E-variables and test martingales via the method of maximal invariants, as explained by \cite{grunwald2019safe} and
already put into practice by e.g. \cite{robbins1970statistical,lai1976confidence}. The present paper shows that the proportion under the null can also be handled in a clean and simple manner. 
As explained below, the resulting instantiated $2 \times 2$ test appears to be quite different from existing sequential and Bayesian approaches. Thus, more than 85 years after {\em the lady tasting tea}, we are able to still say something quite new about the age-old problem of contingency table testing.  

\paragraph{Related Work}
A sequential test for the $2 \times 2$ setting has been suggested as early as 1947 by Wald in his seminal \citep{Wald47} . Wald's test can be turned into a product of \E variables  and would then be safe to use under optional stopping. Yet, as  explained in Section~\ref{sec:conditional}, in the $2 \times 2$ setting the resulting \E variables do not grow as fast as the ones introduced here, and the underlying idea does not generalize to arbitrary models or effect size notions. Other earlier approaches (e.g. \cite[Section V.2]{siegmund2013sequential}) are based on asymptotic approximations. In contrast, our \E-variable based tests are  exact and nonasymptotic.
In fact our tests are more  closely related to, yet still different from, Bayes factor tests:
in the case of simple null hypotheses, \E-variable based tests coincide with Bayes factors \citep{grunwald2019safe}. However, in the $2 \times 2$ setting the null is not simple, and while the Bayes factor is a ratio of two Bayes marginal likelihoods, our \E-variables are ratios of more general, `prequential' \citep{Dawid84} likelihood ratios. In some special cases, the numerator is still  a Bayes marginal likelihood, but the denominator, in the $2 \times 2$ setting, almost never is (Section~\ref{sec:simplebayes}) . Thus, while similar in `look', our approach is in the end quite different from the default Bayes factors for tests of two proportions that were proposed by 
\cite{kass1992approximate} and by \cite{jamil2017default}, the latter based on early work by \cite{Dickey1974}.  To illustrate, in Appendix~\ref{app:gd} (Supplementary Material) we show that none of the variants of the Gunel-Dickey Bayes factor that are applicable in our set-up yield valid \E variables. 

Another, very recent, approach that bears some similarity to ours are the  two-sample tests from \cite{manole2021sequential}.
They focus on a nonparametric setting and their test martingales satisfy optimality properties as the sample size gets large.  Instead, we focus on the parametric case and, for this case, manage to derive  \E variables that are equal to or closely approximate the optimal (as measured according to the GRO criterion) \E variables, thus optimizing for the small-sample case (in principle, our tests could be used in a nonparametric setting as well, but since they rely on using a prior on the alternative, the test martingales of \cite{manole2021sequential} might be easier to use in that case). Another general nonparametric two-sample approach with a sequential flavor (but without optional stopping error guarantees) is \cite{lheritier2018sequential}.
\paragraph{Contents}
In the remainder of this introductory section, we formally introduce \E variables, optional stopping and the concept of GRO-optimality. In Section \ref{sec:twosample} we  propose our generic \E variable for tests of two streams in general and investigate when it has the GRO property. In Sections \ref{proportiontestS} and \ref{sec:compositeH1} we specifically show how these general \E variables can be applied in the setting of a test of two proportions, with and without  restrictions on the alternative hypothesis. In Sections \ref{sec:experiments} and \ref{sec:realworld} we provide, through simulations and a real-world example, comparisons of various \E variables and Fisher's exact test with respect to GRO and power.  In Section~ \ref{sec:otherevariables} we compare our generic approach to other \E variables one might define for this problem, including the ones based on Wald's section test. We end with a conclusion. All proofs are in the appendix.

\subsection{\E Variables and Test Martingales, Safety and Optimality}
\label{sec:basics}
We first need to extend the notion of \E variable to random processes: 
\begin{definition}\label{def:essential}
Let $\{\brv{Y}{j} \}_{ j \in \naturals }$, with all $\brv{Y}{j}$ taking values in some set $\cY$, represent a discrete-time random process.  
Let $\cH_0$ be a  collection of distributions for the  process $\{\brv{Y}{j} \}_{ j \in \naturals }$.   For all $j \in \naturals$, let  $\brv{S}{j}$ be a non-negative random variable that is adapted to $\sigma(\trv{Y}{j})$, with $\trv{Y}{j}= (Y_{(1)}, \ldots, Y_{(j)})$,   i.e. there exists a function $s$ such that $\brv{S}{j} = s(\trv{Y}{j})$. 
\begin{enumerate}
    \item 
We say that $\brv{S}{j}$  is an \E-variable for $\brv{Y}{j}$  conditionally on $\trv{Y}{j-1}$
if for all $P \in \cH_0$,
\begin{equation}\label{eq:eval}
  {\bf E}_P\left[ \brv{S}{j} \mid \brv{Y}{1}, \ldots, \brv{Y}{j-1} \right] \leq 1.
\end{equation}
That is, for each $\trv{y}{j-1}{} \in \mathcal{Y}^{j-1}$, all $P_0 \in \cH_0$, (\ref{eq:evalstart}) holds with $S=s(\brv{y}{1},\ldots,\brv{y}{j-1},\brv{Y}{j})$ and $P$
set to $P_0 \mid \trv{Y}{j-1}= \trv{y}{j-1}$. 
\item If, for each $j$, $\brv{S}{j}$ is an \E variable conditional on $\brv{Y}{1}, \ldots, \brv{Y}{j-1}$, then we call the process $\{\brv{S}{j}{}\}_{j \in \naturals}$ a {\em sequential \E variable process\/} relative to the given $\cH_0$ and $\{\brv{Y}{j} \}_{j \in \naturals}$ and we call 
$\{\trv{S}{m}{}\}_{m \in \naturals}$ with $\trv{S}{m}{} = \prod_{j=1}^m \brv{S}{j}$ the corresponding {\em test martingale}.   
\end{enumerate}
\end{definition}
Henceforth, we omit the phrase `relative to $\cH_0$ and $\{ \brv{Y}{j} \}_{j \in \naturals}$' whenever it is clear from the context. 
By the tower property of conditional expectation, one verifies that for any process of conditional \E variables $\{\brv{S}{j} \}_{j \in \naturals }$, we have for all $m$ that the product $\trv{S}{m}{}$ is itself an `unconditional' \E variable as in (\ref{eq:evalstart}), i.e. ${\bf E}_{P}[S^{(m)}] \leq 1$ for all $P \in \cH_0$. 
Definition~\ref{def:essential} adapts and slightly modifies terminology from \citep{shafer2011test}. As follows from that paper, in standard martingale terminology, what we call a test martingale is a non-negative supermartingale relative to the filtration induced by $\{\brv{Y}{j} \}_{j \in \naturals }$, with starting value $1$.

\paragraph{Safety}
The interest in \E variables and test martingales derives from the fact that we have type-I error control irrespective of the stopping rule used: for any test martingale $\{\trv{S}{j}\}_{j \in \naturals}$, Ville's inequality \citep{shafer2020language} tells us that, for all $0 < \alpha \leq 1$, $P \in \cH_0$,
\begin{equation}\label{eq:ville}
P(\text{there exists $j$ such that\ } \trv{S}{j} \geq 1/\alpha) \leq \alpha.
\end{equation}
Thus, if we measure evidence against the null hypothesis after observing $j$ data units by $\trv{S}{j}$, and we reject the null hypothesis if $\trv{S}{j} \geq 1/\alpha$, then our type-I error will be bounded by $\alpha$, no matter what stopping rule we used for determining $j$.
We thus have type-I error control even if we use the most aggressive stopping rule compatible with this scenario, where we stop at the first $j$ at which $S^{(j)} \geq 1/\alpha$ (or we run out of data, or money to generate new data). We also have type-I error control if the actual stopping rule is unknown to us, or determined by external factors independent of the data $\brv{Y}{j}$.

We will call any test based on $\{\trv{S}{j}\}_{j \in \naturals}$ and a (potentially unknown) stopping time $\tau$ that, after stopping, rejects iff $\trv{S}{\tau} \geq 1/\alpha$ a {\em level $\alpha$-test that is safe under optional stopping}, or simply a {\em safe test}.

\begin{example}\label{ex:first}
Let $P_0$ and $Q$ be any two distributions for the process $Y_{(1)}, Y_{(2)}, \ldots$, and let $\cH_0 = \{ P_0 \}$ represent a simple null.
Let $S^{(m)}$ denote the likelihood ratio for $m$ outcomes and $S_{(j)}$ its constituent factors, i.e. 
\begin{equation}
S^{(m)} = \frac{
q(Y^{(m)})}{p_0(Y^{(m)})} = \prod_{j=1}^m S_{(j)} \text{\ with\ }
S_{(j)} = \frac{q(Y_{(j)}\mid Y^{(j-1)})}{
p_0(Y_{(j)}\mid Y^{(j-1)})}
\end{equation}
where $q(y_{(m)} \mid y^{(m-1)})$ denotes the conditional density corresponding to $Q$ and $p_0(y_{(m)} \mid y^{(m-1)})$ the one corresponding to $P_0$ with respect to a common underlying measure. 
Then the likelihood ratio process $\{S^{(m)} \}_{m \in \naturals}$ constitutes a test martingale, and the process of past-conditional likelihoods $\{S_{(j)} \}$ is a sequential \E variable process relative to $\cH_0$. This can be immediately verified by directly calculating the conditional expectation of $S_{(j)}$ given $Y^{(j-1)}$, noticing that the densities $p_0(Y_{(j)} |Y^{(j-1)})$ cancel in the calculation.
\end{example}
\paragraph{GRO-Optimality, Simple $\cH_1$}
Just like for p-values, the definition of \E variables only requires
explicit specification of $\cH_0$, not of an alternative hypothesis $\cH_1$.  $\cH_1$ becomes crucial  once we distinguish between `good' and `bad' \E variables:
\E variables have been designed to remain small, with high probability, under
the null $\cH_0$. But if $\cH_1$ rather than $\cH_0$ is true, then `good'
\E variables should produce evidence (grow --- because the larger the
\E variable, the closer we are to rejecting the null) against $\cH_0$ as fast
as possible. To make this precise, first consider simple (singleton) $\cH_1= \{ Q \}$. We start with the one-outcome setting of (\ref{eq:evalstart}), i.e. we look at a single \E variable $S_{(j)}$ in isolation for a single outcome $Y_{(j)}$.  Its optimality 
is  measured in terms of
\begin{equation}\label{eq:grow}
{\bf E}_{Q}[\log S_{(j)}],\end{equation}
and the
\E variable which maximizes this quantity among all \E variables that can be written as functions of $Y_{(j)}$ (i.e. non-negative random variables satisfying (\ref{eq:evalstart})), assuming it exists, 
is called the
\emph{Growth Rate Optimal} \E variable for $Y_{(j)}$ relative to $Q$, or simply `$Q$-GRO for $Y_{(j)}$', and denoted as $S_{\gro(Q),(j)}$
More generally, \E variable  $\trv{S}{m}$ is called  \emph{growth rate optimal} relative to $Q$ for $Y^{(m)}$, or simply $Q$-GRO for $Y^{(m)}$, if, among all (unconditional) \E-variables that can be written as a function of $\trv{Y}{m}$, it maximizes  
\begin{equation}\label{eq:marggrow}
{\bf E}_{Q}[\log \trv{S}{m}].
\end{equation} 
We will denote this 
\E-variable, if it exists, by $S^{(m)}_{\gro(Q)}$. The idea to maximize (\ref{eq:marggrow}) goes back to \cite{Kelly56}; the GRO-terminology is from \cite{grunwald2019safe}.
The larger an \E-variable or test martingale tends to be under the alternative, the better it scores in the GRO sense. Of course, the same would still hold if we were to replace the logarithm by another strictly increasing function. But there are various compelling reasons for
why one should take a logarithm here --- see  \cite{grunwald2019safe,shafer2020language}. One interesting reason, not explicitly covered by these two papers, was already given by \cite{breiman1961optimal} and is explained  in detail by \cite[Appendix B.1]{grunwald2020safe}: the $Q$-GRO test martingale
is
also the test martingale which minimizes the expected number of data points needed
before the null can be rejected if we use the test with the aggressive stopping rule described before (reject at the smallest $j$ such that $S^{(j)} \geq 1/\alpha)$. Thus, using the $Q$-GRO test martingale is quite analogous to employing a test that maximizes power. One can also directly see that both notions must be connected by noting that GRO implies optimizing the expectation of $\log S^{(j)}$ whereas power at fixed sample size $j$ is the probability that $\log S^{(j)}$  is larger than $- \log \alpha$. 
Note that we cannot directly use power in designing tests, since the notion of power requires a fixed sampling plan, which we will usually not have: we may not want or not be able to stop at the first $j$ such that we can reject --- for example, we might want to stop early for harm (Section~\ref{sec:realworld}), or we might want to lower $\alpha$ if the first few outcomes look very promising. So we will measure optimality in terms of GRO instead, but for practical usefulness we  {\em do\/} hope that, in cases where we do follow the sampling plan above (stop as soon as  $S^{(j)} \geq 1/\alpha)$), our power remains reasonable. This is suggested by Breiman's observation above, but we want to check it nevertheless. Such a check is done successfully for the $2 \times 2$ model in Section~\ref{sec:experiments}.

In `nice' cases, the $Q$-GRO \E-variable (\ref{eq:marggrow}) for $m$ outcomes can be obtained by multiplying the individual $Q$-GRO \E-variables: 
\begin{proposition}\label{prop:proper}
Let $\cH_1 = \{Q\}$ be simple and $\cH_0$ be potentially composite, and `nondegenerate' in the sense that for some $P \in \cH_0$, $D(Q\| P) < \infty$, $D(\cdot \| \cdot)$ denoting the KL divergence. Suppose the following condition holds (with $q$, $p$ the density of $Q$ and $P$, respectively):
\begin{equation}\label{eq:thecondition}
   \text{
   There exists a $P\in \cH_0$ such that $S_{(1)} = q(Y_{(1)})/p(Y_{(1)})$ is an \E-variable. } 
\end{equation}
Then
$S_{(1)} = S_{\gro(Q),(1)}$ is the  $Q$-GRO \E variable for $Y_{(1)}$. 
An \E variable of this form automatically exists if $\cH_0$ is simple.
%
If we further assume that $Y_{(1)}, Y_{(2)}, \ldots$ are i.i.d. according to all distributions in $\cH_0 \cup \cH_1$, then $S^{(m)}_{\gro(Q)} = \prod_{j=1}^m S_{\gro(Q),(j)}$, i.e. the $Q$-GRO optimal (unconditional) \E variable for $Y^{(m)}$ is the product of the individual $Q$-GRO optimal \E variables. 
\end{proposition}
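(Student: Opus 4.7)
My plan is to address the four claims in the order they are stated, using a common thread: the likelihood-ratio form $q/p$ together with Jensen's inequality. First I would establish the single-outcome GRO identity, then get uniqueness as a corollary, then handle existence (the hardest part), and finally lift everything to $m$ outcomes via i.i.d.\ factorization.

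For the GRO identity (part 2), the calculation is the standard information-theoretic one: for any competing \E variable $S = s(Y_{(1)})$,
\[
{\bf E}_Q[\log S] - {\bf E}_Q\!\left[\log \tfrac{q(Y_{(1)})}{p(Y_{(1)})}\right]
= {\bf E}_Q\!\left[\log \tfrac{S\,p(Y_{(1)})}{q(Y_{(1)})}\right]
\leq \log {\bf E}_Q\!\left[\tfrac{S\,p(Y_{(1)})}{q(Y_{(1)})}\right]
= \log {\bf E}_P[S] \leq 0,
\]
using concavity of $\log$, a change of measure from $Q$ to $P$, and the \E variable property of $S$ at the specific $P$ from (\ref{eq:thecondition}). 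The only measure-theoretic check is that $Q\ll P$, which is forced by (\ref{eq:thecondition}) itself (if $Q$ put mass where $p=0$, then $q/p$ would fail to be a finite random variable under $Q$). The uniqueness claim of part 1 then falls out almost for free: the set of \E variables is convex (a convex combination of non-negative maps whose $\cH_0$-expectations are each at most $1$ has the same property), and $S\mapsto {\bf E}_Q[\log S]$ is strictly concave on $\{S>0\}$, so the $Q$-GRO maximizer is unique once it exists; any second pair $(P',q/p')$ would therefore satisfy $q/p = q/p'$ $Q$-a.s., i.e.\ $p=p'$ on $\mathrm{supp}(Q)$.

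The existence statement in part 3 is the main obstacle. The simple case is immediate: $\cH_0=\{P\}$ gives ${\bf E}_P[q/p]=1$ whenever $Q\ll P$. For convex, weakly compact $\cH_0$ I would construct $P^*$ as the reverse information projection of $Q$, i.e.\ the minimizer of $P\mapsto \mathrm{KL}(Q\|P)$ over $\cH_0$; its existence follows from weak lower semi-continuity of $\mathrm{KL}$ and compactness. The \E variable property of $q/p^*$ then comes from a first-order-optimality argument along the segment $(1-\alpha)P^* + \alpha P'$ for arbitrary $P'\in\cH_0$: since $\alpha \mapsto \mathrm{KL}(Q\|(1-\alpha)P^* + \alpha P')$ is minimized at $\alpha=0$, its right-derivative at $0$ is non-negative, which after computation gives ${\bf E}_Q[p'/p^*]\leq 1$, i.e.\ ${\bf E}_{P'}[q/p^*]\leq 1$. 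The subtle points here are verifying that $\mathrm{KL}$ is finite at $P^*$, justifying the interchange of differentiation and integration (dominated convergence on the difference quotients), and extending the first-order argument to boundary cases; these are standard reverse-information-projection arguments in the style of Li and Barron but not completely elementary.

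For part 4 I combine the preceding pieces on the product space. Under i.i.d.\ sampling the same $P$ from (\ref{eq:thecondition}) works for every $j$, so the product telescopes: $\prod_{j=1}^m S_{\gro(Q),(j)} = q^{(m)}(Y^{(m)})/p^{(m)}(Y^{(m)})$. That this is an \E variable for the block $Y^{(m)}$ follows from independence: for any i.i.d.\ product $P'^{(m)}$ with $P'\in\cH_0$,
\[
{\bf E}_{P'^{(m)}}\!\left[\prod_{j=1}^m \tfrac{q(Y_{(j)})}{p(Y_{(j)})}\right]
= ({\bf E}_{P'}[q/p])^m \leq 1.
\]
Then the Jensen calculation of part 2, applied verbatim at the block level with $P^{(m)}$ playing the role of $P$, shows that $q^{(m)}/p^{(m)}$ maximizes ${\bf E}_{Q^{(m)}}[\log S^{(m)}]$ over all $Y^{(m)}$-measurable \E variables, completing the proof.
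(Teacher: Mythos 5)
Your proposal is correct in substance, but it follows a more self-contained route than the paper. The paper's own proof (Proposition 3 in Appendix A) delegates the key steps to Theorem 1 of \cite{grunwald2019safe}: that theorem is invoked for the statement that a null density $p$ making $q/p$ an \E variable is essentially unique and yields the $Q$-GRO \E variable, and invoked again at the block level for the product claim; existence under convexity and weak compactness is obtained by citing \cite{posner1975random} for lower semicontinuity of the KL divergence (so the minimum over the compact $\cH_0$ is attained) and then once more the cited theorem for the \E variable property of the reverse information projection. You instead prove the optimality and uniqueness claims from first principles --- your Jensen/change-of-measure computation ${\bf E}_Q[\log S]-{\bf E}_Q[\log(q/p)]\le \log {\bf E}_P[S]\le 0$, and strict concavity for uniqueness --- and you sketch the RIPr construction with the first-order condition along segments $(1-\alpha)P^*+\alpha P'$; this is precisely the content of the theorem the paper cites. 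What your route buys is transparency: the single-outcome and block-level GRO statements need nothing beyond Jensen, and the factorization $({\bf E}_{P'}[q/p])^m\le 1$ makes the i.i.d.\ lifting explicit where the paper leaves it to the citation. What it costs is that the existence step is the genuinely hard part, and your sketch leaves open exactly the issues the citation settles: finiteness of $\inf_{P\in\cH_0} D(Q\|P)$ (if the infimum is infinite the first-order argument never starts), the interchange of differentiation and integration, and the boundary cases. Two minor points: in the simple-null case ${\bf E}_P[q/p]=\int_{\{p>0\}}q\,d\mu\le 1$ holds without assuming $Q\ll P$; and your uniqueness argument gives $p=p'$ only $Q$-almost surely, i.e.\ uniqueness of the \E variable rather than of $P$ itself, which is the natural (and intended) form of the claim.
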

If Condition~(\ref{eq:thecondition}) holds and $Y_{(1)}, Y_{(2)}, \ldots$ are i.i.d. according to all distributions in $\cH_0 \cup \cH_1$, it thus makes sense to define the {\em $Q$-GRO  test martingale\/} to be the test martingale $(S^{(j)}_{\gro(Q)})_{j \in \naturals}$. 
We will then have that $S_{\gro(Q),(j)} = s_Q(Y_{(j)})$ for a fixed function $s_Q: \cY \rightarrow \reals^+_0$. 
\begin{example}{\bf [Simple $\cH_1$ and Simple $\cH_0$]}\label{ex:second}
Consider $\cH_1 = \{Q \}$ and simple $\cH_0 = \{P_0 \}$ and arbitrary $Q'$ such that the $Y_{(j)}$ are i.i.d. according to $P, Q$ and $Q'$. Then $S_{(j)} = q'(Y_{(j)})/p_0(Y_{(j)})$ is an \E variable for $Y_{(j)}$, irrespective of the definition of $Q'$, by the same argument as in Example~\ref{ex:first}. By the Proposition above, the $Q$-GRO \E variable for $Y_{(j)}$ is given by setting $q'= q$. Then ${\bf E}_Q
[S_{\gro(Q),(j)}] = {\bf E}_{Y_{(j)} \sim Q}[\log q(Y_{(j)})/p_0(Y_{(j)})]$ also coincides with the KL divergence between $Q$ and $P_0$. 
\end{example}
In Section~\ref{sec:twosample} (Theorem~\ref{simpleSproportions}) we develop functions $s_{Q}$ (denoted $s(\cdot ; n_a, n_b,\theta^*_a,\theta^*_b)$ there) for simple $\cH_1= \{ Q \}$ so that $S_{Q,(1)} = s_{Q}(Y_{(1)})$
is an \E-variable even though $\cH_0$ is composite and not convex, so that Proposition~\ref{prop:proper} does not apply. 
Since we invariably assume the $Y_{(j)}$ are i.i.d., $S_{Q,(j)} := s_{Q}(Y_{(j)})$ is an \E-variable as well and with $S^{(m)}_{Q} := \prod_{j=1}^m S_{Q,(j)}$, $(S^{(m)}_{Q})_{m \in \naturals}$ is a test martingale. 
The construction works for the general setting of two data streams discussed in the introduction, and for some special $\cH_0$ (even though composite and nonconvex), the $S_{Q,(j)}$ will in fact be $Q$-GRO and $(S^{(m)}_{Q})_{m \in \naturals}$ will be the $Q$-GRO test martingale. These include the $\cH_0$ that arise in the $2 \times 2$ setting, our main application. 
For other $\cH_0$, the $\E$ variables $S_{Q,(j)}$ will not necessarily have the $Q$-GRO-property; they are  designed to have (\ref{eq:marggrow}) large, but it may be even larger for other \E variables.

\paragraph{GRO and Composite $\cH_1$}
In case $\cH_1$ is composite, no direct analogue of the GRO-criterion for designing \E variables exists, since it is not clear under what distribution $Q \in \cH_1$ we should maximize (\ref{eq:marggrow}).
In this paper, we deal with this situation by {\em learning\/} $Q$ from the data in a Bayesian fashion. It is now convenient to write $\cH_1 = \{P_{\theta} : \theta \in \Theta_1 \}$ in a parameterized manner (accordingly, henceforth we shall write  $\theta_1$-GRO \E variable instead of $P_{\theta_1}$-GRO \E-variable and $S_{\gro(\theta),(j)}$ instead of $S_{\gro(P_{\theta}),(j)}$). We will assume i.i.d. data, thus, if $\cH_1$ were true, then data would be i.i.d. $\sim P_{\theta^*_1}$ for some $\theta^*_1 \in \Theta_1$. 
Starting with a distribution $W$ on $\Theta_1$, i.e. a prior, at each point in time $j$, we determine the Bayesian posterior $W \mid Y^{(j-1)}$ and use the Bayes predictive $P_{W \mid Y^{(j-1)}}:= \int_{\Theta_1} P_{\theta} d W(\theta \mid Y^{(j-1)})$ as an estimate for the `true' $P_{\theta^*_1}$.
 As is well-known, under conditions on $W$ and $\cH_1$ (which, if $\cH_1$ is finite-dimensional parametric, are very mild), the posterior will concentrate around ${\theta^*}$ and hence $P_{W \mid Y^{(j-1)}}$ will resemble $P_{\theta^*_1}$ more and more, with very high probability, as more data becomes available. 

At each point in time $j$, we use our current estimate $P_{W \mid Y^{(j-1)}}$ to 
design a conditional \E variable $S_{(j)}$. On an informal level, as long as $P_{W \mid Y^{(j-1)}}$ converges to the `true' $P_{\theta^*_1}$, 
the  $S_{(j)}$ will in fact also start to more and more resemble the \E-variables $S_{\gro(\theta^*_1),(j)}$ we designed for  $\cH_1= \{ P_{\theta^*_1} \}$ and which were designed to have a large expected growth under the `true' $P_{\theta^*_1}$. Assuming the convergence happens fast, we have that
\begin{align}
    \label{eq:jasaregret}
{\bf E}_{Y^{(m)} \sim P_{\theta^*_1}} \left[ \log S^{(m)}_{\gro(\theta^*_1)}     - \log \prod_{j=1}^m S_{(j)}\right]
 \end{align}
is small, i.e. we may  expect that the test martingale $\prod_{j=1}^m S_{(j)}$ grows not much slower than $S^{(m)}_{\gro(\theta^*_1)} = \prod_{j=1}^m S_{\gro(\theta^*_1),(j)}$, the best test martingale (maximizing ${\bf E}_{Y^{(m)} \sim P_{\theta^*_1}} \left[ \log S \right]$ over all \E variables $S$ for $Y^{(m)}$) we could have used if we had known the true $P_{\theta^*_1}$ all along. 


\section{Two-Stream Safe Tests}\label{sec:twosample}
Consider the two-stream setting introduced in the beginning of the paper. To formalize it further, we introduce {\em calendar time\/} $t=1,2, \ldots$ and corresponding random variables $V_t$ and $G_t$: at each $t$, we obtain an outcome $V_t$ in $\cY$ in group $G_t \in \{a,b\}$. Importantly though, at this point we make no assumptions about the relative ordering of outcomes from the two groups. At time $t$, we have that $t_a$, the number of $a$'s that are observed so far, and $t_b$, the number of $b$'s observed so far, satisfy $t_a+t_b = t$, but subject to this constraint we allow them coming in any order, e.g. first all $a$'s, or first all $b$'s, or interleaved.
For example, with $t_a = 3$ and $t_b =2$, we might have $V_1 = Y_{1,a}, V_2 = Y_{2,a}, V_3 = Y_{3,a}, V_{4} = Y_{1,b}, V_{5} = Y_{2,b}$ (all $a$s come first, $G_1 = G_2 = G_3 = a, G_4=G_5 =b$) but also, for example $V_1 = Y_{1,a}, V_2 = Y_{1,b}, V_3 = Y_{2,a}, V_{4} = Y_{3,a}, V_{5} = Y_{2,b}$.

We thus have that the (marginal) probability of the first $t= t_a+ t_b$ outcomes, given that $t_a$ of these are in group $a$ and $t_b$ in group $b$, and writing $y^t = (y_1, \ldots, y_t)$, is given by the probability density (or mass function)
\begin{equation} \label{eq:firstgeneral}
p_{\theta_a,\theta_b} (y_{a}^{t_a}, y^{t_{b}}_b) \coloneqq 
p_{\theta_a} (
y_{a}^{t_a}) p_{\theta_b}(
y^{t_{b}}_b) = \prod_{t=1}^{t_a} p_{\theta_a}(y_{t,a})
\prod_{t=1}^{t_b} p_{\theta_b}(y_{t,b}). 
\end{equation}
To indicate that random vector $(Y^{t_a}_a,Y^{t_b}_b) \coloneqq (Y_{1,a} \ldots, Y_{t_a,a},Y_{1,b}, \ldots, Y_{t_b,b})$ has a distribution represented by (\ref{eq:firstgeneral}) we write `$Y^{t_a}_a, Y^{t_b}_b \sim P_{\theta^*_a,\theta^*_b}$'. 

According to the {\em null hypothesis\/} $\cH_0 = \{P_{\theta_a,\theta_b} : (\theta_a,\theta_b) \in \Theta_0 \}$, $\Theta_0 = \{ (\theta,\theta): \theta \in \Theta \}$, both processes coincide. Thus, we have that $\theta^*_a = \theta^*_b = \theta_0$ for some $\theta_0 \in  \Theta$ and then the density of data $y_{a}^{t_a}, y^{t_{b}}_b$ is given by $p_{\theta_0}(y_{1,a}, \ldots, y_{t_a,a},y_{1,b}, \ldots, y_{t_b,b})$.

\subsection{A generic \E variable for 2-stream--blocks}
\label{sec:simplesimple}
We first consider the case in which the alternative hypothesis is simple: $\Theta_1 = \{\theta_1\}$ for some fixed $\theta_1 = (\theta_a^*, \theta_b^*) \in \Theta^2$. 
Consider a fixed sample size of size $n$, and assume that we will observe a block of $n_a$ outcomes in group $a$ and $n_b$ outcomes in group $b$.
In this case, we can define an \E variable as the likelihood ratio between $p_{\theta^*_a,\theta^*_b}$ and a carefully chosen distribution that is a product of mixtures of distributions from $\Theta_0$: for 
$n_a, n_b \in \naturals$, $n \coloneqq n_a + n_b$ and $y^{n_a}_a = (y_{1,a}, \ldots, y_{n_a,a}) \in \cY^{n_a}$ and $y^{n_b}_b = (y_{1,b}, \ldots, y_{n_b,b}) \in \cY^{n_b}$, we define:
\begin{multline}\label{eq:simpleevar}
s(y^{n_a}_a, y^{n_b}_b; n_a, n_b, \theta^*_a,\theta^*_b) 
\coloneqq \\ \frac{p_{\theta_{a}^*}( y^{n_a}_a) }{
\prod_{i=1}^{n_a} 
 \left( \frac{n_a}{n} p_{\theta^*_a} (y_{i,a}) + \frac{n_b}{n} p_{\theta^*_b} (y_{i,a}) \right)} \cdot 
 \frac{p_{\theta_b^*}(y^{n_b}_b)}{
\prod_{i=1}^{n_b} 
 \left( \frac{n_a}{n} p_{\theta^*_a} (y_{i,b}) + \frac{n_b}{n} p_{\theta^*_b} (y_{i,b}) \right)} .
 \end{multline} 
\begin{theorem}
\label{simpleSproportions}
The  random variable $S_{[n_a,n_b, \theta^*_a,\theta^*_b]} :=  s(Y^{n_a}_a, Y^{n_b}_b; n_a,n_b, \theta^*_a,\theta^*_b)$ is an \E variable, i.e. 
we have: 
\begin{equation*}
\sup_{\theta \in \Theta} {\bf E}_{V^n \sim  P_{\theta}}\left[
s( V^n ; n_a,n_b, \theta^*_a,\theta^*_b)
\right] \leq 1.
\end{equation*}
Moreover, if $\{ P_{\theta}: \theta \in \Theta \}$  is a  convex set of distributions, then $S_{[n_a,n_b, \theta^*_a,\theta^*_b]}$ is the $(\theta^*_a,\theta^*_b)$-GRO \E variable: for any non-negative function $s'$ on $\cY^{n_a+n_b}$ satisfying
$\sup_{\theta \in \Theta} {\bf E}_{V^n \sim  P_{\theta}}\left[
s'( V^n)\right] \leq 1$, we have:
\begin{equation*}
{\bf E}_{Y^{n_a}_a, Y^{n_b}_b  \sim P_{\theta^*_a,\theta^*_b}}[ \log 
s(Y^{n_a}_a, Y^{n_b}_b; n_a,n_b, \theta^*_a,\theta^*_b) ] \geq 
{\bf E}_{Y^{n_a}_a, Y^{n_b}_b  \sim P_{\theta^*_a,\theta^*_b}}[ \log s'(Y^{n_a}_a, Y^{n_b}_b)].
\end{equation*}
\end{theorem}
Crucially, in the second part of the theorem, we do not require convexity of $\cH_0$, a set of distributions over $\cY^{n_a+n_b}$ (if $\cH_0$ were convex, the GRO property would already follow automatically \citep{KoolenG21}), but instead of $\{P_{\theta}: \theta \in \Theta \}$, a set of distributions on $\cY$. In the $2 \times 2$ case $\cH_0$ is not convex, since the set of i.i.d. Bernoulli distributions over $n_a+n_b > 1$ outcomes is not convex; but $\{P_{\theta}: \theta \in \Theta \}$ is just the Bernoulli model on one outcome, which is convex, so that in this setting, we get the GRO \E variable. 

To illustrate, consider the basic case in which data comes in in fixed batches $Y_{(1)}, Y_{(2)}, \ldots$, with each batch $Y_{(j)} = ((Y_{(j-1)n_a +1,a},Y_{(j-1) n_a +2,a},\ldots, Y_{j n_a,a}) $ $
,(Y_{(j-1)n_b +1,b},Y_{(j-1) n_b +2,b},\ldots, Y_{j n_b,b}) ), $ having exactly $n_a$ outcomes in group $a$ and $n_b$ outcomes in group $b$, and let $n = n_a+ n_b$. This case would obtain, for example, in a sequential clinical trial in which patients come in one by one, each odd patient is given the treatment and each even patient is given the placebo. Then $n=2$, $n_a=n_b =1$. We may then measure the evidence against the null hypothesis by the product E-value
\begin{equation}\label{eq:fourth}
S^{(m)}_{[n_a,n_b,\theta^*_a,\theta^*_b]} \coloneqq 
\prod_{j=1}^m S_{(j),[n_a,n_b,\theta^*_a,\theta^*_b]} \ \ ; \ \ 
S_{(j),[n_a,n_b,\theta^*_a,\theta^*_b]}
\coloneqq s(Y_{(j)}; n_a, n_b, \theta^*_a,\theta^*_b).
\end{equation}
By Ville's inequality (\ref{eq:ville}), the probability under any distribution in the null that there is an $m$ 
with $S^{(m)}_{[n_a,n_b, \theta^*_a,\theta^*_b]}$ larger than $1/\alpha$, is bounded by $\alpha$, hence, type-I error guarantees are preserved under optional stopping if we perform the test based on $\{ S^{(m)}_{[n_a,n_b,\theta^*_a,\theta^*_b]}\}_{m \in \naturals}$ as defined underneath (\ref{eq:ville}), as long as we stop between and not `within' batches (if we stop within a batch, the E-variable $S^{(m)}_{[n_a,n_b, \theta^*_a,\theta^*_b]}$ is undefined). 

If the data do not come in batches of equal size, we may proceed as follows. First, we need to fix some $n_a\geq 1 $ and $n_b \geq 1$ of our own choice. The treatment below will give valid \E variables irrespective of our choice of $n_a$ and $n_b$, but it will be seen that some choices are much more reasonable (will lead to much more evidence against the null, if the null is false) than others. 

Thus, fix $n_a$ and $n_b$, set $n = n_a + n_b$. At each time $t$, we will have observed, so far, some number $t_a$  of outcomes in group $a$, and $t_b$ in group $b$. Now let $m_t$ be the largest $m$ such that $m n_a \leq t_a$ and $m n_b \leq t_b$. Now, for $m=1,2, \ldots$, define $Y_{(m)}$ as above. At any given time $t$, $Y_{(1)}, Y_{(2)}, \ldots, Y_{(m_t)}$ will have been observed, and there may be a number $n'_j$ remaining observations in group $j \in \{a,b\}$ so that either $n'_a < n_a$ or $n'_b <n_b$ or both. Since the $\{\brv{Y}{j} \}_{j \in \naturals }$ determine a test martingale in the sense of Definition~\ref{def:essential}, optional stopping while preserving type-I error guarantees is then possible at any point in time $t$, as long as the \E variable is calculated as (\ref{eq:fourth}) above for $m = m_t$, thus ignoring the final $n'_a + n'_b$ outcomes.

How should $n_a$ and $n_b$ be chosen in practice? For example, consider a variation of the clinical trial setting above in which the treatment-control assignment is randomized: for each incoming patient, a fair coin is flipped to decide treatment $(a)$ or placebo $(b)$. 
Then at any given time the number of patients in group $a$ and $b$ will not be precisely equal, but if we choose $n_a = n_b = 1$ as above it is highly unlikely that the amount of data we have to ignore at any given time $t$ is very large. Similarly, if $G_t$, the group membership of the $t$-th observation is itself i.i.d. according to some distribution $P^*$, we might have some idea of the probability $p^*(a)$ assigned to group $a$; if $p^*(a) = 2/5$ (say), we would choose
$n_a =2, n_b =3$. 

We can add a significant amount of extra flexibility by allowing for variable group sizes, i.e., the chosen $n_a$ and $n_b$ may depend on the past. For this, we introduce a function $f: \bigcup_{t \geq 0} \cY^t \times \{0,1\}^t \rightarrow \{\text{\sc stop-block},\text{\sc continue}\}$ that, at each point in time $t$, decides whether the current block should end $(f(V^t, G^t) = \text{\sc stop-block})$ or not $(f(V^t,G^t) = \text{\sc continue})$. 
As long as the value of this function does not depend on the actual outcomes $V_t$ observed after the last block that was completed, all requirements for having a test martingale and thus for safe optional stopping are met. For example, suppose that on data $V_1, G_1, V_2, G_2, \ldots, V_t, G_t$ observed so-far, $f$ has output $\text{\sc stop-block}$ at $m$ occasions, the last time at $t'= t-k$ for some $k > 0$. Then $f(t)$ is allowed to depend on $Y^{(m)}$ and $G^t$, but for any fixed $Y^{(m)}= y^{(m)}, G^t = g^t$, for all $y^k, y'^k \in \cY^{k}$, we must have $f((y^{(m)}, y^k),g^t) = f((y^{(m)},y'^k),g^t)$. 
In this way, one can in principle  {\em learn\/} $p^*(a)$ from the data, changing group sizes $n_a$ and $n_b$ flexibly as data come in. For simplicity, we have not followed this approach here, but all our results readily extend to this case. 
\paragraph{Extension to $k$-sample streams} It is entirely straightforward
to extend \eqref{eq:simpleevar} to the scenario where we do not compare $2$, but $k$ i.i.d. data streams. 
Indeed, in the appendix we state and prove the generalization of  Theorem~\ref{simpleSproportions} to $k$ data streams. 
We again consider some fixed $\vec{\theta} = (\theta_a, \theta_b, ..., \theta_k) \in \Theta^k$. The probability of the first $t = \sum_{g = 1}^k t_g$ outcomes is now given by the density or mass function $p_{\vec{\theta}} \coloneqq p_{\theta_a}(y_a^{t_a})p_{\theta_a}(y_b^{t_b}) ... p_{\theta_k}(y_k^{t_k})$. We now need to fix the $k$ group outcome numbers $\vec{n} \coloneqq (n_a, n_b, ..., n_k)$ in advance, which allows us to define the extended \E variable as a function of the data $\vec{y}^n = (y_a^{n_a}, y_b^{n_b}, ..., y_k^{n_k})$, with $n= \sum_{g=1}^k n_g$: 
\begin{equation}\label{eq:evarkextension}
s(\vec{y}^n; \vec{n}, \vec{\theta}^*) 
\coloneqq \prod_{g = 1}^k
\frac{p_{\theta_{g}^*}( y^{n_g}_g) }{
\prod_{i=1}^{n_g} 
 \left( \sum_{g' = 1}^k \frac{n_{g'}}{n} p_{\theta^*_{g'}} (y_{i,g}) \right)} ,
 \end{equation}
 for testing the null where $\theta_a = \theta_b = ... = \theta_k$; it is again GRO if $\{P_{\theta}: \theta \in \Theta\}$ is convex. 
 We now return to the notationally simpler 2-sample case except for a short example of an application of this extension as a flexible and exact alternative to the chi-square test in section \ref{sec:experiments}.

\subsection{The generic \E variable with Bayesian alternative}
\label{sec:simplebayes}
Now fix some prior $W_1$ with density $w_1$ on the alternative $\Theta_1 \subseteq \Theta^2$. 
We can trivially extend the definition of our generic \E-variable relative to singleton $(\theta^*_a,\theta^*_b)$ to an \E-variable relative to arbitrary prior $W_1$ on $(\theta^*_a,\theta^*_b)$: define $p_{W_1,a}(y) := \int p_{\theta_a}(y) d W_1(\theta_a)$, the integration being over the marginal prior distribution over $\theta_a$, and similarly, $p_{W_1,b}(y) := \int p_{\theta_b}(y) d W_1(\theta_b)$.
Then, as a corollary of Theorem~\ref{simpleSproportions}, 
\begin{multline}\label{eq:simplewithprior}
s(y^{n_a}_a, y^{n_b}_b; n_a, n_b, W_1) \coloneqq \\
\frac{\prod_{i=1}^{n_a}  p_{W_{1,a}}( y_{i,a}) }{
\prod_{i=1}^{n_a} 
 \left( \frac{n_a}{n} p_{W_{1,a}} (y_{i,a}) + \frac{n_b}{n} p_{W_{1,b}} (y_{i,a}) \right)} \cdot 
 \frac{\prod_{i=1}^{n_b} p_{W_{1,b}}(y_{i,b})}{
\prod_{i=1}^{n_b} 
 \left( \frac{n_a}{n} p_{W_{1,a}} (y_{i,b}) + \frac{n_b}{n} p_{W_{1,b}} (y_{i,b}) \right)} .
 \end{multline}
is itself also an \E-variable, as follows from applying Theorem~\ref{simpleSproportions} with a `meta'-set of distributions,
which is possible since we made no assumptions at all on the set $\Theta$ in Theorem~\ref{simpleSproportions}: we replace $\Theta$ by $\cW(\Theta)$, the set of distributions on $\Theta$;  we replace the  background set of distributions $\{ p_{\theta}: \theta \in \Theta\}$  by the set of distributions $\{ p_{W}: W \in \cW(\Theta) \}$; we replace the simple $\cH_1 = \{P_{\theta^*_a,\theta^*_b} \}$ by a `simple' $\cH'_1 = \{ P_{W_a,W_b}\}$ for some distributions $W_a$ and $W_b$ on $\Theta$.
Such $W_1$-based generic \E-variables can be used to {\em learn\/} the parameters $\theta^*_a,\theta^*_b$ as more data in both streams come in, and this is how we will use them in a sequential context with optional stopping. Thus, assume again that data comes in batches $Y_{(1)}, Y_{(2)}, \ldots$ with each $Y_{(j)}$ consisting of $n_a$ outcomes in group $a$ and $n_b$ outcomes in group $b$ (generalization to flexible group sizes changing in time and depending on the past as described at the end of Section~\ref{sec:simplesimple} is straightforward). 
We start with some prior $W_1$ for the first batch $Y_{(1)}$ but we now use, for the $j$-th batch $Y_{(j)}$, the {\em Bayesian posterior\/} $W_1 \mid Y^{(j-1)}$ as prior to define the  $j$-th \E-variable with:
\begin{equation}\label{eq:sixth}
S^{(m)}_{[n_a,n_b,W_1]} \coloneqq \prod_{j=1}^m 
S_{(j),[n_a,n_b,W_1]} \ \ ; \ \ 
S_{(j),[n_a,n_b,W_1]} \coloneqq
s(Y_{(j)}; n_a, n_b, W_1 | Y^{(j-1)}).
\end{equation}
Again, $\{ S_{(j),[n_a,n_b,W_1]} \}_{j \in \naturals}$ is a sequential \E-variable process, so testing based on the corresponding test martingale is safe under optional stopping by (\ref{eq:ville}). If data are sampled from some alternative hypothesis $(\theta^*_a,\theta^*_b)$, then as data accumulates, the posterior $W_1$ will, with high probability, concentrate narrowly around $(\theta^*_a,\theta^*_b)$ and so 
$S_{(j),[n_a,n_b,W_1]}$
will behave more and more similarly to the `best' $(\theta_a^*,\theta_b^*)$ \E variable.  Still, with the exception of a special case we indicate below, in general we cannot expect it to be the $W_1$-GRO E-variable. But we are not particularly concerned by this: our experiments in Section~\ref{sec:experiments} indicate that, at least in  the $2 \times 2$ table setting, it behaves quite well in terms of power, which is often the main practical interest. 
\paragraph{Simplification when $\{P_{\theta}: \theta \in \Theta \}$ is Convex and $\cY$ is finite}
Denoting $W_{1,g}|Y^{(m)}$ as the marginal posterior for $\theta_g$, for $g \in \{a,b\}$, we can rewrite (\ref{eq:sixth}) as 
\begin{align}\label{eq:sixthb}
S^{(m)}_{[n_a,n_b,W_1]}  = 
& \prod_{j=1}^m  
\frac{\prod_{i=1}^{n_a} {p_{W_{1,a}|Y^{(j-1)}}(Y_{(j-1)n_a +i,a})}
\prod_{i=1}^{n_b} 
{p_{W_{1,b}|Y^{(j-1)}}(Y_{(j-1)n_b +i,b})}}
{
\prod_{g \in \{a,b\}}
\prod_{i=1}^{n_g} \left( \frac{n_a}{n} p_{W_{1,a}|Y^{(j-1)}}(Y_{(j-1) n_g +i,g})
+ \frac{n_b}{n} p_{W_{1,b}|Y^{(j-1)}}(Y_{(j-1) n_g +i,g})
\right) 
} \nonumber \\
\overset{\text{if $\{P_{\theta}: \theta \in \Theta \}$ convex, $\cY$ finite}}{=}\  &   \prod_{j=1}^m  
\prod_{i=1}^{n_a} \frac{p_{W_{1,a}|Y^{(j-1)}}(Y_{(j-1)n_a +i,a})}{
p_{\breve\theta_0 |Y^{(j-1)}}(Y_{(j-1) n_a +i,a})}
\prod_{i=1}^{n_b} 
\frac{p_{W_{1,b}|Y^{(j-1)}}(Y_{(j-1) n_b +i,b})}{
p_{\breve\theta_0 |Y^{(j-1)}}(Y_{(j-1) n_b +i,b})
}
\end{align}
with 
$\breve\theta_0 |Y^{(j-1)} \in \Theta \text{\ s.t.\ }
p_{\breve\theta_0 |Y^{(j-1)}} = 
 (n_a/n) p_{W_{1,a}| Y^{(j-1)}} + (n_b/n) p_{W_{1,b}|Y^{(j-1)}}$, the existence of $\breve\theta_0|Y^{(j-1)}$ being guaranteed if $\{P_{\theta}: \theta \in \Theta \}$ is convex and the sample space is finite (for then, by Carath\'eodory's Theorem, for any distribution $W$ on $\Theta$ there is a distribution $W'$ on $\Theta$ with finite support such that $p_W= p_{W'}$, and by convexity, there is $\theta^{\circ}$ such that $p_{W'} = p_{\theta^{\circ}}$). This rewrite will enable several additional results for such $\Theta$.

\paragraph{Connection to Bayes Factors}
Consider $W_1$ such that $\theta_a$ and $\theta_b$ are independent under $W_1$ with marginal distributions $W_a$ and $W_b$, and now further take $n_a = n_b = 1$. By basic telescoping, and using that, if independent under the prior, $\theta_a$ and $\theta_b$ must also be independent under the posterior,  
we can then further rewrite (\ref{eq:sixth}) as 
\begin{align}\label{eq:sixthc}
& \frac{
\int 
p_{\theta_a}(Y^m_a) dW_a(\theta_a) \int  p_{\theta_b}(Y^m_b)
d W_b(\theta_b)
}{ \prod_{j=1}^m\prod_{g\in \{a,b\}}
 \left( \frac{1}{2} p_{W_{1,a}| Y^{(j-1)}} (Y_{j,g}) + \frac{1}{2} p_{W_{1,b}|Y^{(j-1)}} (Y_{j,g}) \right)
} \ \overset{\text{if $\{P_{\theta}: \theta \in \Theta \}$ convex}}{=}  \\ 
\label{eq:seventh}
& 
\frac{
\int 
p_{\theta_a}(Y^m_a) dW_a(\theta_a) \int  p_{\theta_b}(Y^m_b)
 d W_b(\theta_b)
}{ \prod_{j=1}^m \prod_{g\in \{a,b\}} p_{\breve\theta_0|Y^{(j-1)}}(Y_{j,g})
}
\end{align} 
where the equality holds if $\{P_{\theta}: \theta \in \Theta_0\}$ is convex and $\cY$ is finite so that (\ref{eq:sixthb}) holds.
As seen from (\ref{eq:sixthc}), even without finiteness or convexity, the numerator of the generic product \E -value is now equal to the Bayesian marginal likelihood of the data based on prior $W_1$. 
\commentout{While in general, this rewrite of the numerator breaks down if $n_a$ or $n_b$ is larger than one and/or $\theta_a$ and $\theta_b$ are dependent under the prior, 
The reason is that according to (\ref{eq:sixth}) (see the line above (\ref{eq:sixthb})), the outcomes in each group within each of the $m$ blocks are treated as independent (they have the same density).  
In contrast, while a Bayes factor for $m$ blocks of $n$ data points can also be rewritten as a product of $n \cdot m$ Bayes predictive densities, in general it makes every outcome dependent on every previous outcome --- the posterior, and hence the posterior predictive, is updated also within each block. If $\theta_a$ and $\theta_b$ are independent under the prior though, then they are also independent under the posterior, and if $n_a = n_b = 1$ then the Bayes predictive densities for the two outcomes within each block will also be independent, and we get (\ref{eq:seventh}).
}
Thus, in this special case (i.e. $n_a=n_b=1$, prior independence; the derivation breaks down if these do not hold), if the denominator could also be written as a Bayes marginal likelihood, then our \E variable would really be a Bayes factor. Yet, even if $\{P_{\theta}: \theta \in \Theta\}$ is convex, it cannot be written in this way, though it is very `close': each of the $m$ factors in the denominator in (\ref{eq:seventh}) is the product density function of two identical distributions for one outcome, and Proposition~\ref{prop:beta} shows that, in the special case of the $2 \times 2$ model with $W_a$ and $W_b$ independent beta priors, this distribution may itself be the Bayes predictive distribution obtained by equipping $\Theta_0$ with another beta prior. Still, 
for a real Bayes factor corresponding to $\cH_0$, for each $j$, the two outcomes $Y_{j,a}, Y_{j,b}$ in the $j$-th block would not be independent given $Y^{(j-1)}$, whereas in (\ref{eq:seventh}) they are, so we may conclude that in general, our e-variables are not equivalent to any Bayes factor.

\section{Safe tests for Two Proportions}
\label{proportiontestS}
We assume the setting above and, for now, assume that both streams are Bernoulli. This will substantially simplify the formulae. Thus, $\Theta =[0,1]$ and  
(\ref{eq:firstgeneral}) now specializes to
\begin{equation} \label{eq:first}
p_{\theta_a,\theta_b} (y_{a}^{t_a}, y^{t_{b}}_b) \coloneqq 
p_{\theta_a} (y_{1,a}, \ldots, y_{t_a,a}) p_{\theta_b}(y_{1,b}, \ldots, y_{t_b,b}) = 
\theta_a^{t_{a1}}(1-\theta_a)^{t_a - t_{a1}}\theta_b^{t_{b1}}(1-\theta_b)^{t_b - t_{b1}}.
\end{equation}
with 
$t_{a1}$ the number of outcomes $1$ in stream $a$ among the first $t_a$ ones, and $t_{b1}$ the number of outcomes $1$ in stream $b$ among the first $t_b$ ones.
According to the {null hypothesis},  we have that $\theta^*_a = \theta^*_b = \theta_0$ for some $\theta_0 \in  \Theta=[0,1]$.
(\ref{eq:first}) now simplifies to: 
\begin{equation*} 
p_{\theta_0}(y_a^{t_a},y_b^{t_b}) \coloneqq  \theta_0^{t_1}(1-\theta_0)^{t_0},
\end{equation*}
with $t_1$ the number of ones in the sequence $y^{t_a+t_b}= y_{1}, \ldots, y_{t_{a}+t_b}$, and similarly for $t_0$.

We now run through the results of the previous section for this instantiation of our test. Again, we start with the case of a simple $\cH_1 = \{P_{\theta^*_a,\theta^*_b} \}$. 
(\ref{eq:simpleevar}) can now be written as:  
\begin{equation}\label{eq:simpleevarb}
s(y^{n_a}_a, y^{n_b}_b; n_a, n_b, \theta^*_a,\theta^*_b) 
\coloneqq  \frac{p_{\theta_{a}^*}( y^{n_a}_a) }{p_{\theta_0}( y^{n_a}_a)} \cdot 
 \frac{p_{\theta_b^*}(y^{n_b}_b)}{p_{\theta_0}(y^{n_b}_b)
}, \ \ \text{where}\ \ \theta_0 = \frac{n_a}{n} \theta^*_a + \frac{n_b}{n} \theta^*_b.
 \end{equation}
Theorem~\ref{simpleSproportions} tells us that this is an \E variable. Since $\{P_{\theta}: \theta \in \Theta \}$, the Bernoulli model,  is convex, the theorem also tells us that in this case  the generic \E variable with simple alternative is always $(\theta^*_a,\theta^*_b)$-GRO.

We now turn to the generic \E-variable relative to arbitrary prior $W_1$. For the Bernoulli model the Bayes posterior predictive distribution is itself a Bernoulli distribution, with its parameter equal to the posterior mean. Therefore, while the generic \E-variable relative to prior $W_1$ is still given by (\ref{eq:simplewithprior}), this now simplifies to:
\begin{equation}\label{eq:twobytwoevar}
s(y^{n_a}_a, y^{n_b}_b; n_a, n_b, W_1) =  s(y^{n_a}_a, y^{n_b}_b; n_a,n_b, \theta^*_a,\theta^*_b) \text{\ for $\theta^*_g = {\bf E}_{\theta_g \sim W_1}[\theta_g], g \in \{a,b\}$.}
\end{equation}
Combining  this with (\ref{eq:sixthb}) we infer that 
\begin{align}\label{eq:niceeval}
S^{(m)}_{[n_a,n_b,W_1]} & =  
\prod_{j=1}^m  
\prod_{i=1}^{n_a} \frac{p_{\breve\theta_a|Y^{(j-1)}}(Y_{(j-1)n_a +i,a})}{
p_{\breve\theta_0 |Y^{(j-1)}}(Y_{(j-1) n_a +i,a})}
\prod_{i=1}^{n_b} 
\frac{p_{\breve\theta_b|Y^{(j-1)}}(Y_{(j-1) n_b +i,b})}{
p_{\breve\theta_0 |Y^{(j-1)}}(Y_{(j-1) n_b +i,b})
}
\end{align}
where  $
\breve\theta_a | Y^{(j-1)} = {\bf E}_{\theta_a \sim W \mid Y^{(j-1)}}[\theta_a]$ and $\breve\theta_b |Y^{(j-1)} = {\bf E}_{\theta_b \sim W  \mid Y^{(j-1)}}[\theta_b]$ and  $\breve\theta_0 |Y^{(j-1)} = (n_a/n)
\breve\theta_a \mid Y^{(j-1)} + (n_b/n)\breve\theta_b \mid Y^{(j-1)}$.
\paragraph{Simplified Calculations with Independent Beta Priors}
Now take the special case in which $\theta_a$ and $\theta_b$ are independent under the prior $W_1$ with marginals $W_a$ and $W_b$.
In this case, $\theta_a$ and $\theta_b$ are also independent under the posterior, and we can simplify 
$
\breve\theta_a | Y^{(j-1)} = {\bf E}_{\theta_a \sim W_a \mid Y^{(j-1)n_a}_{a}}[\theta_a]$, the expectation  of $\theta_a$ under the posterior $W_a$ given all data so far in group $a$, and similarly for group $b$. Using beta priors, this expectation is easy to calculate and we get:
\begin{proposition}\label{prop:beta}
Let $\theta_a,\theta_b$ be independent under $W_1$, with  marginals $W_a$ and $W_b$ respectively. Suppose that these are beta priors with parameters
$(\alpha_a,\beta_a)$ and $(\alpha_b,\beta_b)$ respectively. Then, upon defining
$U_a = \sum_{i=1}^{(j-1) n_a} Y_{i,a}, U_b = \sum_{i=1}^{(j-1) n_b} Y_{i,b},
U = \sum_{i=1}^{(j-1) n} (Y_{i,a} + Y_{i,b})$ we have that 
$\breve\theta_a,\breve\theta_b,\breve\theta_0$ as above satisfy: $\breve\theta_a| Y^{(j-1)}  = (U_a + \alpha_a)/((j-1) n_a + \alpha_a + \beta_a )$, $\breve\theta_b| Y^{(j-1)}  = 
 (U_b + \alpha_b)/((j-1) n_b + \alpha_b + \beta_b )
$ respectively, and $\breve\theta_0| Y^{(j-1)}$ is as further above. 
In the special case that we fix the prior parameters in the groups proportional to the group size fraction $\kappa:= n_b/n_a$, i.e we fix $\alpha_b = \kappa \alpha_a$, $\beta_b = \kappa \beta_a$, the expression for $\breve\theta_0$ simplifies to  $\breve\theta_0 |Y^{(j-1)} = 
(U + (1+\kappa) \alpha_a)/
((j-1) n + (1 +\kappa) \alpha_a + (1+ \kappa) \beta_a)
$. 
\end{proposition}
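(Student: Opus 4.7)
\textbf{Proof plan for Proposition~\ref{prop:beta}.}
The statement factors into two essentially routine pieces: (i) identifying the posterior means $\breve\theta_a \mid Y^{(j-1)}$ and $\breve\theta_b \mid Y^{(j-1)}$, and (ii) checking that, under the proportional-prior assumption $\alpha_b=\kappa\alpha_a,\;\beta_b=\kappa\beta_a$ with $\kappa=n_b/n_a$, the convex combination $\breve\theta_0=(n_a/n)\breve\theta_a+(n_b/n)\breve\theta_b$ reduces to the pooled form claimed. No genuine obstacle is present; the only thing that requires care is bookkeeping of the denominators.

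\textbf{Step 1: Posterior marginals.} Because $\theta_a$ and $\theta_b$ are independent under the prior $W_1$ and the joint Bernoulli likelihood factorises as $p_{\theta_a}(Y^{(j-1)n_a}_a)\,p_{\theta_b}(Y^{(j-1)n_b}_b)$ (cf.\ \eqref{eq:firstgeneral}), the posterior $W_1\mid Y^{(j-1)}$ also factorises into marginals. Hence I can treat each group in isolation and apply beta–Bernoulli conjugacy: after $(j-1)n_a$ observations in group $a$ with sum $U_a$, the marginal posterior for $\theta_a$ is $\mathrm{Beta}(\alpha_a+U_a,\,\beta_a+(j-1)n_a-U_a)$, whose mean is $(U_a+\alpha_a)/((j-1)n_a+\alpha_a+\beta_a)$. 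The identical argument in group $b$ yields the stated formula for $\breve\theta_b\mid Y^{(j-1)}$. The expression for $\breve\theta_0\mid Y^{(j-1)}$ is then simply its definition as $(n_a/n)\breve\theta_a+(n_b/n)\breve\theta_b$ given above \eqref{eq:niceeval}.

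\textbf{Step 2: Proportional-prior simplification.} Here I substitute $\alpha_b=\kappa\alpha_a$, $\beta_b=\kappa\beta_a$ and note that the group-$b$ denominator becomes $(j-1)n_b+\kappa(\alpha_a+\beta_a)=\kappa\bigl[(j-1)n_a+\alpha_a+\beta_a\bigr]$, i.e.\ exactly $\kappa$ times the group-$a$ denominator $D:=(j-1)n_a+\alpha_a+\beta_a$. Writing the convex combination over the common denominator $nD$, the numerator collapses as
\begin{equation*}
n_a(U_a+\alpha_a)+\tfrac{n_b}{\kappa}\bigl(U_b+\kappa\alpha_a\bigr)
= n_a\bigl(U_a+U_b\bigr)+(n_a+n_b)\alpha_a
= n_a U + n\,\alpha_a,
\end{equation*}
using $n_b/\kappa=n_a$ and $U=U_a+U_b$. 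Dividing through by $n_a$ in both numerator and denominator (equivalently, multiplying by $(1+\kappa)=n/n_a$) yields exactly the advertised expression $\breve\theta_0\mid Y^{(j-1)}=\bigl(U+(1+\kappa)\alpha_a\bigr)/\bigl((j-1)n+(1+\kappa)\alpha_a+(1+\kappa)\beta_a\bigr)$.

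\textbf{What could go wrong.} The only real risk is an indexing slip --- e.g.\ confusing $(j-1)n_a$ (the number of outcomes in group $a$ after $j-1$ completed blocks) with $jn_a$, or conflating $U$ with $U_a+U_b$ after blocks of unequal length. The proportionality $\alpha_b/\alpha_a=\beta_b/\beta_a=\kappa=n_b/n_a$ is exactly what is needed for the two denominators to align, so the collapse in Step 2 does not work for general prior parameters; this explains why the simplification is stated only in this special case.
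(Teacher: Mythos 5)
Your proof is correct and follows essentially the same route as the paper's: the posterior means come from standard beta--Bernoulli conjugacy (which the paper simply cites as standard, while you spell out the factorization), and the pooled expression is obtained by the same substitution $\alpha_b=\kappa\alpha_a$, $\beta_b=\kappa\beta_a$ into the convex combination $\breve\theta_0=(n_a/n)\breve\theta_a+(n_b/n)\breve\theta_b$, noting the group-$b$ denominator is $\kappa$ times the group-$a$ one. The only difference is cosmetic bookkeeping (common denominator $nD$ versus the paper's $\tfrac{1}{1+\kappa}$, $\tfrac{\kappa}{1+\kappa}$ weights), so there is nothing to add.
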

\section{(Un)Restricted Composite \hmc{1}  in the $2 \times 2$ setting}\label{sec:compositeH1}
In this section we describe the main instantiations of the $2 \times 2$ stream testing scenario that are relevant in practice. These differ in the choice of $\cH_1$: the choice can be fully unrestricted (we simply want to find whether there is any discrepancy from $\cH_0$ at all); restricted in terms of effect size; or restricted because we have prior knowledge about either $\theta^*_a$ or $\theta^*_b$.
We consider each in turn, the second and third scenario in a separate subsection.  Section~\ref{sec:experiments} provides extensive numerical simulations for all three scenarios. 

In the first scenario, a researcher wants to perform a \emph{two-sided test}; they simply aim to find any discrepancy from \hmc{0} if it exists, with no restrictions are placed on \hmc{1}. In this case, if we choose $W_1$ as independent beta priors on $\theta_a$ and $\theta_b$, we can simply proceed as described in Proposition \ref{prop:beta} above, taking a beta prior for simplicity. We will develop a reasonable `default' choice for the hyper parameters by experiment in Section~\ref{sec:experiments}. 
\subsection{Dealing with Effect Sizes}
\label{sec:effectsize}
In the second scenario we really want to test $\cH_0$ against a restricted $\cH_1$ consisting of those hypotheses that have a certain minimal {\em effect size\/} $\delta$. This would then be a one-sided test. For example, a researcher might know that a new treatment must cure at least a certain number of patients more compared to a control treatment to provide \emph{a clinically relevant treatment effect} $\delta$. 
In this case, $\cH_1$ could be restricted to either of the sets $\Theta(\delta)$ or  $\Theta^+(\delta)$, where
\begin{equation}
\label{eq parameter space absolute}
    \Theta(\delta) = \left\{\theta \in [0,1]^2:  d(\theta) = \delta \right\} \ \ ; \ \ 
    \Theta^+(\delta) = \begin{cases}
    \left\{\theta \in [0,1]^2:  d(\theta) \geq  \delta \right\} & \text{if $\delta > 0$}\\ 
 \left\{\theta \in [0,1]^2:  d(\theta) \leq  \delta \right\} & \text{if $\delta < 0$,}
\end{cases} \end{equation}
where  we set $d((\theta_a, \theta_b)) = \theta_b- \theta_a$. 
A second notion of effect size that often will be applicable in this sort of research is the  \emph{log odds ratio} between $\theta_b$ and $\theta_a$, with restricted parameter space again given by (\ref{eq parameter space absolute}) but $d$ set to 
\begin{equation}
\label{eq:logodds}
d( (\theta_a,\theta_b)) = \log\left(\frac{\theta_b}{1-\theta_b} \cdot \frac{1-\theta_a}{\theta_a}\right).
\end{equation}
These are the two effect size notions that will feature in our experiments. 
An illustration of both divergence measures and the resulting restricted parameter spaces is given in Figure \ref{fig:parameterSpaceExamples}. 

A third popular notion of effect size, the relative risk, behaves,  for small $\theta_a$ and $\delta > 0$, very similarly to the odds ratio, and will therefore not be separately considered in our experiments. 
\commentout{We will also not consider two-sided testing against a restricted \hmc{1} (i.e., one wants to restrict the alternative hypothesis to a treatment being either substantially better \emph{or} substantially worse than a control) since this is not a very common scenario. Such \E variables for the $2 \times 2$ setting and stream data could however be constructed with a method analogous to the one we describe below, by combining two ``simple'' \E variables \citep{Turner19}.
}

\begin{figure}[ht]
     \centering
     \begin{subfigure}[b]{7cm}
         \centering
         \includegraphics[width=\textwidth]{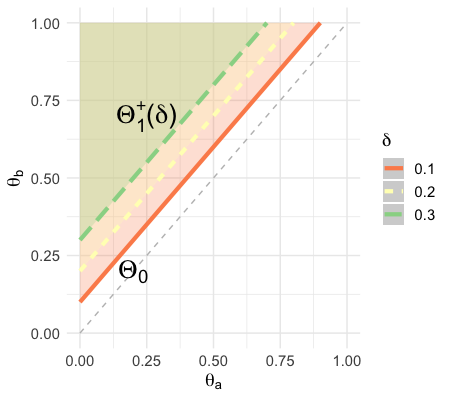}
         \caption{$d((\theta_a, \theta_b)) = \theta_b- \theta_a$}
     \end{subfigure}
     \hfill
     \begin{subfigure}[b]{7cm}
         \centering
         \includegraphics[width=\textwidth]{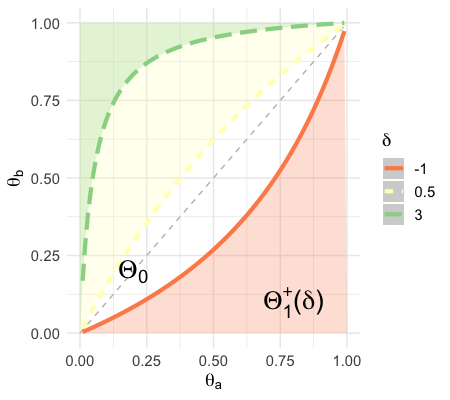}
         \caption{$d( (\theta_b,\theta_a)) = \log\left[\frac{\theta_b}{1-\theta_b}\frac{1-\theta_a}{\theta_a}\right]$}
     \end{subfigure}
        \caption{Examples of restricted alternative hypothesis parameter spaces for several values of two divergence measures; the difference between group means and the log odds ratio. $\Theta_0$ denotes the null hypothesis parameter space; $\Theta_1^+(\delta)$ the restricted alternative hypothesis parameter space.}
        \label{fig:parameterSpaceExamples}
\end{figure}

\commentout{
\paragraph{A (sub)optimal \E variable for restricted \hmc{1}} What would be the ideal \E -variable for $\cH_1$ thus restricted? We first consider the case of a $\cH_1 = \{P_{\theta}: \theta \in \Theta(\delta')\}$ with a precise effect size $\delta' \neq 0$. For this case, following the motivation of Section~\ref{sec:basics}, we simply adopt the criterion (\ref{eq:regret}), re-stated for convenience below:
\begin{align}\label{eq:regretc}
\textsc{regret}(\trv{S}{m})  := \max_{\theta \in \Theta(\delta')} {\bf E}_{\trv{Y}{m} \sim P_{\theta}} [ \log S^{(m)}_{\gro(\theta)} - \log \trv{S}{m}{} ],
\end{align}
The ideal \E variable at sample block size $m$ would be the \E variable achieving, among all \E variables for $\cH_0$, the minimum regret (\ref{eq:regretc}). We will refrain from the goal of optimizing (\ref{eq:regretc}) precisely. Instead we will consider \E variable processes as defined in \eqref{eq:niceeval} with priors $W_1$ on $\Theta(\delta')$: $S^{(m)}_{n_a, n_b, W_1}$. Among these, we will search for the prior that approximately minimizes (\ref{eq:regretc}). Since the optimizing prior depends on $m$, we will design as our default choice a prior $W^*_1$ that is suboptimal (but not too bad) for small $m$, and optimal, among the class considered, for large values of $m$ in Section \ref{sec:bestpriors}.}

If we pick $\cH_1$ restrict to $\Theta(\delta')$, then we could simply use the beta prior mentioned before with support conditioned on this set. What about the more realistic case of a $\cH_1$ with $\delta \in \Theta^+(\delta')$? A first, intuitive (and certainly defensible) approach would be to use a prior $W'_1$ that is spread out over $\Theta^+(\delta')$, e.g. (if $\delta'> 0$)  the beta prior as above conditioned on $\delta \geq \delta$. However, in terms of the GRO criterion, there are good reasons to still use a prior $W^*_1$ that puts all prior mass on $\Theta(\delta')$, the boundary of the real parameter space $\Theta(\delta^+)$. Namely, for the resulting \E variable process $S^{(1)}_{[n_a, n_b, W^*_1]}, S^{(2)}_{[n_a, n_b, W^*_1]}, \ldots$, it holds for every $m$ that 
\begin{multline}\label{eq:regretd}
\text{for all $(\theta_a, \theta_b)$ with $d((\theta_a, \theta_b)) > \delta'$}, \  
{\bf E}_{\trv{Y}{m} \sim P_{(\theta_a, \theta_b)}} [  \log S^{(m)}_{[n_a, n_b, W^*_1]} ] \geq \\
\min_{\theta \in \Theta(\delta')} {\bf E}_{\trv{Y}{m} \sim P_{\theta}} [  \log S^{(m)}_{[n_a, n_b, W^*_1]} ].
\end{multline} 
Thus, we might want to use the  prior $W^*_1$ also if $\delta$ can be more extreme than $\delta'$,  since 
if $\delta$ is actually more extreme, the expected (log-) evidence against $\cH_0$ using $W_1^*$ (even though designed for $\delta'$) will actually get larger anyway. 

The advantage of the first approach is that it will lead to much higher GROwth 
($
{\bf E}_{P_{(\theta_a, \theta_b)}} [  \log S^{(m)}_{[n_a, n_b, W'_1]} ]$
much larger than ${\bf E}_{P_{(\theta_a, \theta_b)}} [  \log S^{(m)}_{[n_a, n_b, W^*_1]} ]$)
if we are `lucky' and
$|d(\theta_a, \theta_b)| \gg  |\delta'|$. The price to pay is that it will lead to somewhat smaller growth if $d((\theta_a, \theta_b))$ is (still arger than but) close to $\delta'$ (experiments omitted). It is easy to see why: the prior $W'_1$ must spread out its mass over a much larger subset of $[0,1]^2$ than $W^*_1$. Therefore, the E-variables based on $W'_1$ will perform somewhat worse than those based on $W^*_1$ if the data are sampled from a point $(\theta^*_a, \theta_b^*)$ in  the support of $W^*_1$, simply because $W^*_1$ gives much larger prior support in a neighborhood of $(\theta^*_a, \theta_b^*)$.
For this reason, and also because it is computationally a lot simpler, we decided to focus our experiments on the second approach rather than the first. 

\paragraph{Calculating the prior and posterior for restricted \hmc{1}}
For both notions of effect size, $\theta_a$ and $\theta_b$ can no longer be independent for any prior on $\Theta(\delta)$. Hence, the prior and posterior do not longer admit the composition in terms of  beta densities as in Proposition~\ref{prop:beta}. For example, when putting a prior on $\Theta(\delta)$ with the additive effect size notion, we know the new domain of $\theta_a$ would be $[0, 1 - \delta]$. $\theta_b$ is completely determined by $\theta_a$ and $\delta$ in this case. We will still use a beta prior on $\Theta(\delta)$ and calculate posteriors by
a numerical approach, explained in Appendix~\ref{app:numerical} of the Supplementary Material. 

\subsection{Working with Restrictions on event rate}
\label{subsection: prior knowledge}
In practice, researchers often already have estimates of the occurrence rate of events in the {control group} in their experiments; for example, estimates of the proportion of patients that recover from a disease under standard care are known, and researchers investigate whether the proportion of recovered patients is higher in a group receiving an experimental treatment. This restriction on $\theta_a$ can be incorporated in the \E variable. This incorporation becomes especially easy if $\cH_1$ is already restricted to a set $\Theta^+(\delta')$ with minimal relevant effect size $\delta'$. For then $\Theta(\delta')$ contains just one point $(\theta_a^*, \theta_b^*)$ (in the case of the linear effect size, this is $(\theta_a,\theta_a+\delta)$), and the \E-variable constructed according to the guidelines of the previous subsection, which puts all its mass on $\delta'$ even though we allow $\delta \geq \delta'$, would be the generic \E-variable corresponding to putting prior mass 1 on $(\theta_a^*, \theta_b^*)$. 


\section{Illustration via Simulated Data}
\label{sec:experiments}
In this section, we illustrate properties of our \E variables for $2 \times 2$ application through simulated data, generated with our software package publicly available through Github \citep{LyT20}.
First, we determine a reasonable choice of beta prior hyper-parameter to use in \eqref{eq:niceeval} in terms of the GRO-criterion. Thereafter, we show by more simulations that our proposal for the beta prior hyper-parameter based on GRO also performs well in terms of power (recall from Section~\ref{sec:basics} that while we cannot optimize for power directly, we do want procedures with reasonable power). Finally, we compare the power of our \E variable with this default prior choice and different restrictions on \hmc{1} to Fisher's exact test. 

\paragraph{REGROW}
For simplicity, in all our experiments we will invariably set the beta prior hyper-parameters to  $\alpha_a=\alpha_b=\beta_a=\beta_b= \gamma$ for some $\gamma > 0$ (recall that any such choice leads to a valid \E variable).  We will aim for the $\gamma$ that minimizes (\ref{eq:jasaregret}) in the worst-case over all $\theta^*_1 \in [0,1]^2$, thereby following the REGROW {\em ({\em relative\/} growth-rate optimality in worst-case)\/} criterion of \cite{grunwald2019safe}, who give a minimax regret motivation for this choice. In essence, the prior minimizing, among all distributions over $[0,1]^2$, the maximum of (\ref{eq:jasaregret}) over all $\theta^*_1$ can be viewed as the prior that allows us to learn $\theta^*_1$ as fast as possible (based on a minimal sample) in the worst-case. Here we are contented to adopt a sub-optimal but computationally convenient prior by restricting the minimum to be over a 1-dimensional family of beta priors with hyper parameter $\gamma$. We find the minimizing $\gamma$ by experiment: results are depicted in Figure \ref{fig:regret}. It 
depends on $m$, which is unknown in advance, but for large $m$, in the setting with $n_a = n_b = 1$, it converges to $\gamma \approx 0.18$, and this is the value we will take as our default choice --- our experiments below indicate that it remains a good choice, also when our main concern is power, and also under restrictions on $\cH_1$.

\begin{figure}[ht]
    \centering
    \begin{subfigure}{8cm}
     \centering
     \includegraphics[width=\textwidth]{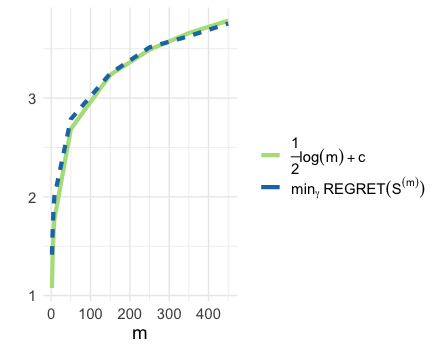}
     \caption{$\min_{\gamma} \regret S^{(m)}$}
    \end{subfigure}
    \hfill
    \begin{subfigure}{8cm}
          \centering
     \includegraphics[width=\textwidth]{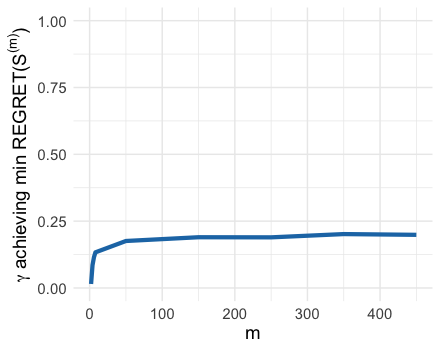}
     \caption{$\argmin_{\gamma} \regret S^{(m)}$}
    \end{subfigure}
    \caption{Minimized regret w.r.t. Beta prior hyperparameter $\gamma$ for the two-sample stream \E variable for two proporions \eqref{eq:twobytwoevar}. Relative growth rate (see \eqref{eq:jasaregret}) was estimated through 10000 simulations and $\regret$ was calculated as the maximum over $\theta_1^*.$}
    \label{fig:regret}
\end{figure}

\paragraph{Power}
Whereas GROwth is the natural performance measure in experiments that may always be continued at some point in the future, traditionally oriented researchers may be more interested in power. The question is then whether the optimal asymptotic choice $\gamma \approx 0.18$ in terms of the relative GRO property for unrestricted $\cH_1$ is also the optimal choice in terms of power (which is usually considered in combination with some minimal effect size, i.e. a restricted $\cH_1$). The following experiment shows that by and large it is. For simplicity we only illustrate the case $n_a = n_b = 1$ and a desired power of $0.8$. For various effect sizes $\delta$, and various values of $\gamma$, we first determined the smallest sample size (number of blocks) $m$ such that, under optional stopping up until and including $m$, the power is $\geq 0.8$ in the worst case over all $(\theta_a,\theta_b)$ with $\delta = \theta_b - \theta_a$. Here by `optional stopping up until and including $m$', we mean `we stop and reject the null  iff $S^{(m')}_{[n_a,n_b,W_{[\gamma]}} > \alpha^{-1}$ for some $m' \in \{1,2, \ldots, m$\}, and we stop and accept the null if this is not the case (so $m$ is the maximal sample size we consider)'.  
We call this $m$ the {\em worst-case\/} sample size needed for $80\%$ power at effect size $\delta$ with prior parameter $\gamma$.  
The reason for calling it worst-case is that in practice, by engaging in optional stopping with a fixed maximal sample size, the  \emph{expected sample size} of this procedure is smaller: if, for $m'< m$, we already have $S^{(m')}_{[n_a,n_b,W_{[\gamma]}} > \alpha^{-1}$ then we stop and reject early; if not, we go on until we have seen $m$ blocks and then stop (and reject iff $S^{(m)}_{[n_a,n_b,W_{[\gamma]}} > \alpha^{-1}$). 
We thus performed two simulation experiments: first, to estimate the worst-case sample size (at $\alpha = 0.05$), and second, to estimate the expected sample size. Again, the estimates were obtained by re-simulating a sequence of data blocks $K$ times for a large number of $K$, making sure the bias and variance of the estimates were sufficiently small. 
\begin{figure}[ht]
    \centering
     \begin{subfigure}[b]{10.4cm}
     \centering
         \includegraphics[width =\textwidth]{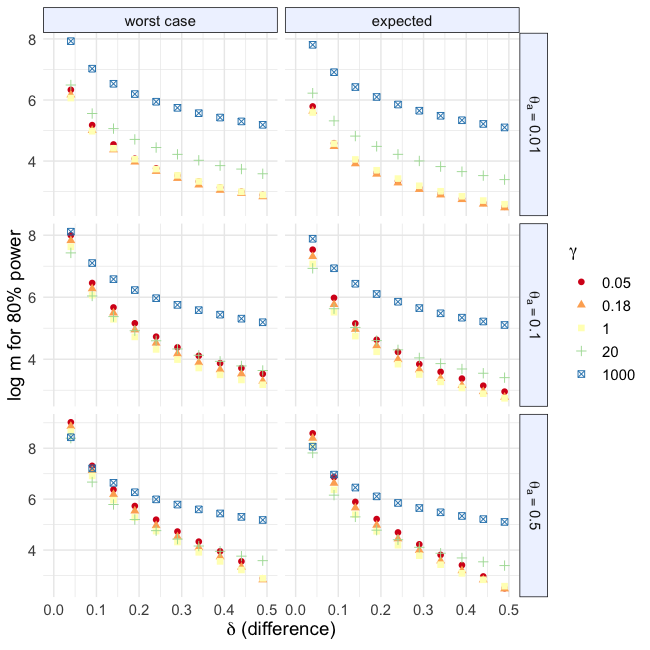}
     \caption{$m$ on log scale}
    \end{subfigure}
         \begin{subfigure}[b]{4cm}
     \centering
         \includegraphics[width =\textwidth]{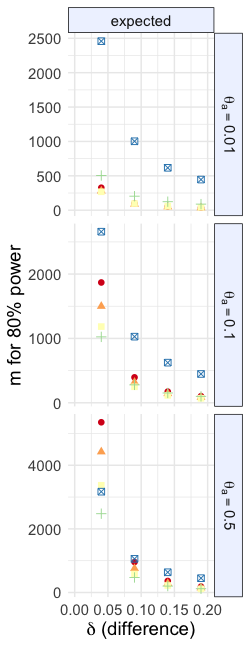}
     \caption{$m$, identity}
    \end{subfigure}
    \caption{In 2000 simulations, the (natural logarithm of) the number of data blocks $m$ (``sample sizes") needed for achieving $80\%$ power while testing at $\alpha = 0.05$ for distributions with varying group means and varying differences between group means were estimated for different beta prior parameter values.
    }
    \label{fig:powermisspecification}
\end{figure}
In Figure \ref{fig:powermisspecification} results of these experiments are depicted. We make two observations: first, almost no difference in sample sizes to plan for between $\gamma= 0.18$ and $\gamma = 0.05$ was observed for distributions with small expected sample sizes (represented by the triangles and the dots, which overlap for most data points), and other values of $\gamma$ obtained smaller power, indicating that the relative growth-optimal $\gamma = 0.18$ could in practice be used as a default setting for our \E variable --- and as a consequence, we recommend it as such. Second, in the rightmost panel we see that for distributions with {\em very\/} small relative differences between $\theta_a$ and $\theta_b$, e.g. $P_{0.5, 0.58}$, values of $\gamma$ higher than $0.18$ yielded a higher power, whereas for such $\delta$, the relative GROW criterion was optimized for $\gamma = 0.18$ for the corresponding (very large) stopping times in our simulation experiments.
This is not surprising given what is known for simple $\cH_0 = \{P_{\theta_0}\}$: when testing a point null $\theta_0$ with a 1-dimensional exponential family alternative, safe tests based on Bayes factors with standard Bayesian (e.g. Gaussian or conjugate) priors do not  obtain optimal power in an asymptotic sense: they reject if $|\hat\theta - \theta_0|^2 \gtrsim (\log n)/n$ (with $\hat\theta$ denoting the MLE; see the example on $Z$-tests by \cite{grunwald2019safe}) whereas based on nonstandard `switching' \citep{PasG18} or `stitching' methods \citep{howard2018uniform}, corresponding to special priors with densities going to infinity as effect size goes to $0$, one can get rejection if $|\hat\theta - \theta_0|^2 \gtrsim (\log \log n)/n$. However, there is a significant price to pay in terms of the constants hidden in the asymptotics, and in practice, `standard' priors may very well perform better at all but very large sample sizes  \citep{maillard2019mathematics}. Given that the higher $\gamma$, the more the beta prior behaves like a switch prior, we conjecture that what  we see in Figure~\ref{fig:powermisspecification}(b) at very small $\delta$ is a version of the switching/stitching phenomenon with a composite null; since it only kicks in at very large sample sizes, we prefer $\gamma = 0.18$ as the default choice after all. 

Finally, we compared the performance of our \E variables with the ``default" beta priors with $\gamma = 0.18$ with their classical counterpart, Fisher's exact test. We show that with Fisher's exact test, type-I error probability guarantee is lost, whereas with the \E variables it remains bounded --- since these results are exactly as would be expected from the theory they have been placed in the supplementary material (Figure \ref{fig:type1Error} in Appendix \ref{app:experiments} in the Supporting Material). In the main text below, we compare worst-case and expected stopping times of the \E variables with- and without restrictions on \hmc{1} for sample sizes one would need to plan for when analyzing experiment results with Fisher's exact test; see Figure \ref{fig:powerfisher}. We noticed that the expected sample sizes achieved under optional stopping with the \E variable with unrestricted \hmc{1} were very similar to the sample sizes needed to plan for with Fisher's exact test. When using a correctly specified restriction on \hmc{1} (the leftmost data points in the second and third subfigures), this expected number of samples is even considerably lower than the sample size to plan for with Fisher's exact test. However, under misspecification, when the difference or log odds ratio used in the design of the \E variable turns out to be a lot smaller than the real difference present in the data generating machinery, one should expect to collect more samples (the data points towards the right in the second subfigure). This effect would disappear if we were to put a prior on the full $\Theta^+(\delta)$ rather than the boundary $\Theta(\delta)$, at the price of slightly worse behaviour in the well-specified case when data is sampled from $\Theta(\delta)$.

Note that in Figure~\ref{fig:powerfisher} we used the default beta prior parameters $\gamma = 0.18$ found optimal for the unrestricted case for the restricted cases as well; some first experiments revealed that changing the prior parameter values did not lead to significant changes in power for the restricted \E variables (results not shown). We do however offer the possibility in our software package \citep{LyT20} to run similar experiments for users to determine the optimal prior parameter $\gamma$  for a given expected sample size and $\Theta^{(+)}(\delta')$.

\begin{figure}[ht]
    \centering
    \includegraphics[width = 12cm]{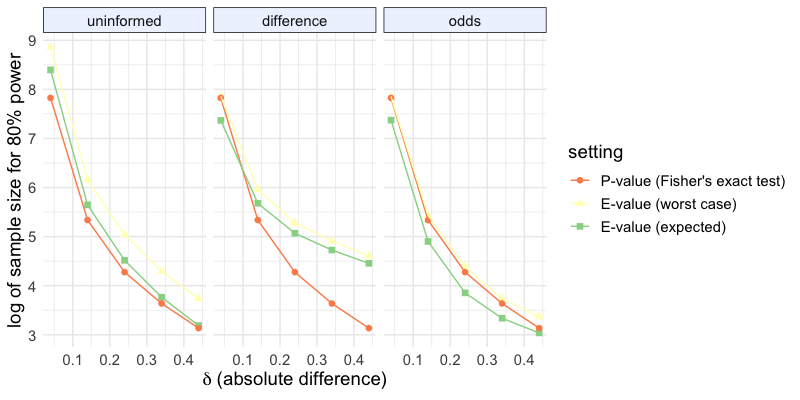}
    \caption{Estimates from 1000 simulations of worst-case and expected sample sizes for achieving 80\% power estimated for three types of \E variables with different restrictions on \hmc{1}, and the sample size to plan for with Fisher's exact test. Hypothesized effect sizes were $0.04$ for the \E variables with prior information on the absolute difference and were converted equivalently for the log odds ratio prior information case, and we set $\gamma = 0.18$ for the beta priors.}
    \label{fig:powerfisher}
\end{figure}

\paragraph{Beyond Two-Stream Data: Safe Tests for $K$ Proportions} We also compared the performance of the extended version of our \E variable for $k$ Bernoulli data streams to the corresponding classical, nonsequential counterpart, the chi-square test \citep{mchugh2013chi}. In this setting, we have a $k \times 2$ contingency table test, where we test whether $k$ Bernoulli data streams come from the same source. The extension of \eqref{eq:niceeval} to $k$ data streams analogously to \eqref{eq:evarkextension} is straightforward. Our \E variable with uniform priors significantly outperforms the chi-square test for small sample sizes and large effect sizes (see Figure \ref{fig:powerchisq}), probably explained by the fact that the chi-square test is not exact, but the \E variable is. For expected cell counts smaller than $5$ the chi-square test should not be used, reflected in an increased number of samples needed for similar power \citep{mchugh2013chi}.

\begin{figure}[ht]
    \centering
    \includegraphics[width = 6.75cm]{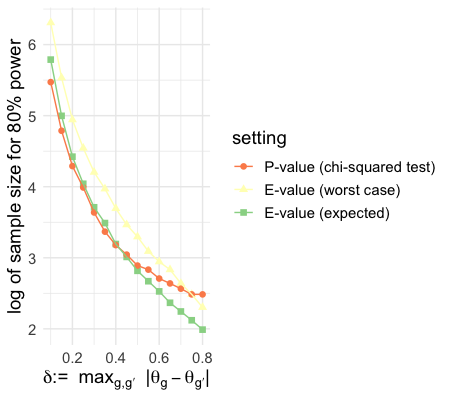}
    \caption{Estimates from 1000 simulations of worst-case and expected sample sizes for achieving 80\% power estimated for testing with the $k$-stream \E variable, and the sample size to plan for with the chi-square test. Data were simulated with balanced data blocks, $\vec{n} = (1,1,1,1)$ and $\vec{\theta}$ was set as an equally spaced grid from $\theta_a = 0.1$ to $\theta_k = \theta_a + \delta_{\text{max}}$. We set $\gamma = 1$ for the beta priors. We see that for large enough $\delta_{\max}$, the expected sample size becomes significantly smaller than the fixed sample size needed for the chi-squared test, overtaking it at approximately $\delta_{\max} =0.45$, which really means that one third of the times the effect size is $0.15$, one third it is $0.3$ and one third $0.45$. At these $\delta$, the expected sample size for our $2$-stream \E variable is still larger than the fixed number needed for Fisher's exact test.}
    \label{fig:powerchisq}
\end{figure}
\section{Illustration via Real World Data}
\label{sec:realworld}
We will now demonstrate the approach through a real-world example: the SWEPIS study on labor induction \citep{wennerholm2019induction}. \cite{wagenmakers_ly_2020} have used this example before to illustrate how using single p-values to make decisions can hide valuable information in research data.

In the SWEPIS study, two groups of pregnant women were followed. In the first group labor was induced at 41 weeks, and in the second labor was induced after 42 weeks. The study was stopped early, as 6 cases of stillbirth were observed in the 42-weeks group (at $n_b = 1379$), as compared to 0 in the 41-weeks group (at $n_a = 1381$). These data yield a significant Fisher's exact test, $P \approx 0.015$, for testing that the number of stillbirths in the 42-weeks group is higher, when (wrongly) assuming that $n_a$ and $n_b$ were fixed in advance to the above values.

If we had used \E variables for continuously analyzing this data, would we then have found evidence for superiority of the 41 weeks approach, and would we have stopped the study earlier? As the \E variables we propose are not exchangeable, i.e. their values change under permutations of the data sequences, a direct comparison to the results of the SWEPIS study is not possible as the exact data stream is not available. To simulate a \quotec{real-time} scenario equivalent to the SWEPIS study, we assume we collect a total of $1380$ data blocks, with $n_a = n_b = 1$, with a total of $2760$ observations. We already know that in group a, $0$ events are observed. In group b, $6$ events are observed, of which we know that the last event was observed in data block $1380$, directly before the study was stopped. Hence, we can simulate the \quotec{real-time} data by permuting the indices of the observations in group b in the $1379$ first data blocks.

Four different approaches for analyzing the data with \E variables were explored: without any restriction on \hmc{1}, with a restriction based on the additive divergence measure (the minimal difference between the groups), with a restriction based on the log odds ratio, and with a restriction on the event rate in the control group \emph{and} on the minimal difference. The minimal difference, log odds ratio and event rate used were chosen based on a large recent meta-analysis on stillbirths \citep{muglu2019risks}; we used $\delta = 0.00318$ as a restriction on the difference between the groups, $\log(2)$ for the log odds ratio and $0.0001$ as the event rate. For all \E variables, the default beta prior hyperparameters with $\gamma = 0.18$ as earlier were used.

In Figure \ref{fig:swepis} the spread of the evidence collected with the four types of \E variables in $1000$ simulations analogous to the SWEPIS setting is depicted. Because the observed effect size was higher than expected, \E values obtained with the (too low) restriction on the effect size were lower than the \E values obtained with the \E variable without restrictions. Adding the restriction on the event rate increased the \E values, and in all 1000 simulations, the SWEPIS study would have been stopped before the occurrence of the sixth stillbirth. Figure \ref{fig:swepis} also depicts results of a second simulation experiment, where we sampled $1000$ data streams from $P_{0, 6/1380}$ and recorded the stopping times while analyzing the streams with the four \E variables with different restrictions on \hmc{1}. With the \E variables without restriction, or with a restriction on the event rate and difference between the groups, we would have often stopped data collection earlier than in the SWEPIS setting. 

We can thus conclude that, would the monitoring of the study have been performed with \E variables instead of p-values, first of all we would have collected \emph{correct} evidence for a higher proportion of stillbirths in the 42-weeks group, and second, the degree of evidence is quite similar to that collected with the (incorrectly determined) p-value: both are significant at the $0.05$ level. Wagemakers and Ly with their method also found evidence for the existence of a difference between the two groups, but not nearly of the same degree: they reported Bayes factors that varied, depending on the choice of the prior, between $1$ and $5.4$ (note that whenever we reject, our product of \E values, which like a Bayes factor can be thought of as a prequential likelihood ratio, must be $\geq 20$). A possible explanation for this difference could be that the Bayes factors used for collecting evidence in their study are not designed for analyzing stream data. As we also saw in our experiments, choosing the wrong prior or restriction on \hmc{1} can make a large difference for the evidence collected. These results show that when planning a prospective study, using \E variables for analysis could, through their flexibility, contribute to earlier evidence collection compared to existing methods.\\ 

\begin{figure}[H]
    \centering
    \begin{subfigure}{11cm}
         \centering
         \includegraphics[width=\textwidth]{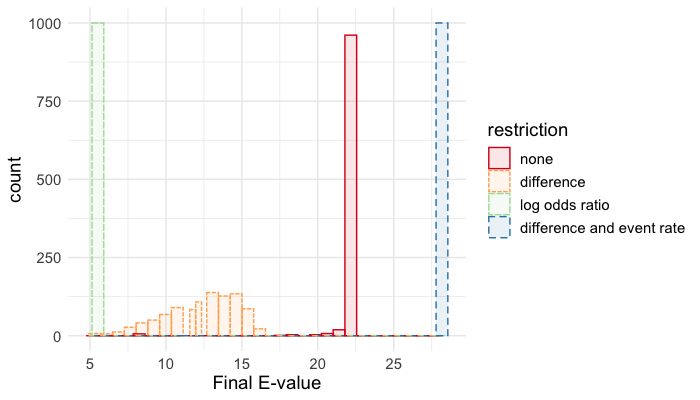}
         \caption{Simulated \E values in SWEPIS setting, stopping at $m = 1380$ or when $E \geq 20$}
    \end{subfigure}
    \hfill
    \begin{subfigure}{11cm}
     \centering
     \includegraphics[width=\textwidth]{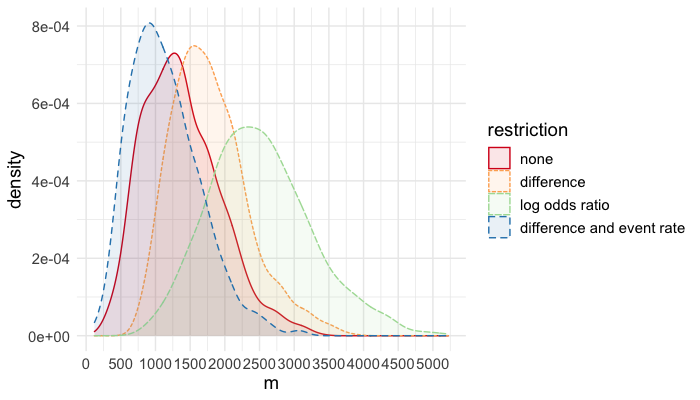}
     \caption{Simulated stopping times in setting with continuing until $E \geq 20$}
    \end{subfigure}
    \caption{Spread of \E values and stopping times observed with safe analysis of 1000 simulations of data streams analogous to the SWEPIS scenario, with four different types of restrictions on \hmc{1}.} 
    \label{fig:swepis}
\end{figure}

\section{Other \E Variables for Two Data Streams}
\label{sec:otherevariables}
\subsection{The GRO \E variable for some Exponential and Location Families} 
\label{sec:expfam}
The simplification (\ref{eq:simpleevarb}) shows that in the Bernoulli case with simple $\Theta_1 = \{(\theta^*_a,\theta^*_b)\}$, we can take in our denominator $p_{\theta_0}$ with  $\theta_0 = \frac{n_a}{n} \theta^*_a + \frac{n_b}{n} \theta^*_b$ --- which can also be interpreted as the distribution in the null corresponding to a mixture of the means, rather than the mixture of two distributions in the null. The Bernoulli model is a special case of  1-parameter exponential families which can all be parameterized in terms of their means so that $\Theta \subset \reals$ and ${\bf E}_{P_{\theta}}[Y] = \theta$; this is also possible for some location families that are not of exponential form. This suggests that, for all such models, instead of  (\ref{eq:simpleevar})  we might also consider the likelihood ratio (\ref{eq:simpleevarb}). For the Bernoulli model, both definitions will coincide, but for general 1-parameter exponential families they do not since their corresponding set of densities is not convex. The question is now whether (\ref{eq:simpleevarb}) defines an \E variable for general exponential families. It turns out that the answer is {\em no\/} in general, but {\em yes\/} in some special cases.
For a negative example, consider the case with $\Theta = \reals^+$ representing the family of exponential distributions in their mean-value parameterization, i.e. $p_{\theta}(y) = \lambda \exp(- \lambda y)$ with $\lambda =  1/\theta$ and take $n_a=n_b = 1$. A simple calculation shows that for any $\theta^*_a \neq \theta^*_b \in \Theta$, we have $\lim_{\theta \rightarrow \infty} {\bf E}_{Y_a, Y_b \text{\ i.i.d.}\sim P_{\theta}}[p_{\theta^*_a}(Y_a)p_{\theta^*_b}(Y_b)/p_{(\theta^*_a+\theta^*_b)/2}(Y_a,Y_b)]= \infty$. The negative binomial families provide, by a similar calculation, another negative example. For a positive example, consider the case with $\Theta = \reals$ representing the Gaussian location family with fixed variance $1$ and 
again take $n_a=n_b=1$. A simple calculation shows that (\ref{eq:simpleevarb}) is equal to the likelihood ratio for testing whether the difference $Z = Y_a-Y_b$ is a Gaussian with variance $\sqrt{2}$ with either mean $0$ or mean $\theta_b - \theta_a$. This is in fact the standard paired-sample $Z$-test that would normally be advised in this situation. In fact it is the GRO \E variable for this situation:
\begin{proposition}\label{prop:exp}
Let $\{P_{\theta}: \theta \in \Theta \}$ represent a family of probability distributions with densities $p_{\theta}$, with $\Theta$ a convex set in $\reals^k$ for some $k\geq 1$. For any $\theta^*_a,\theta^*_b \in \Theta$ we have: if (\ref{eq:simpleevarb}) is an \E variable for $\Theta_1 =\{ (\theta^*_a,\theta^*_b) \}$ then it is the GRO \E variable for $\Theta_1 =\{ (\theta^*_a,\theta^*_b) \}$.
\end{proposition}
The proof is immediate from Proposition~\ref{prop:proper}. 
The proposition implies that in the special cases in which (\ref{eq:simpleevarb}) does provide an \E variable, it is to be preferred (achieves better growth) above our original construction (\ref{eq:simpleevar}). (\ref{eq:simpleevar}) has the advantage that it provides an \E variable relative to arbitrary models. We plan to study the cases in which (\ref{eq:simpleevarb}) can be used instead in future work. 

\subsection{The Conditional \E variable for Tests of Two Proportions}
\label{sec:conditional}
\cite{Wald47} proposed a 2-sample sequential probability ratio test (SPRT) for the $2 \times 2$ setting. Since SPRTs can be written in terms of products of \E variables (although products of \E variables often do not give SPRTs; see the discussion by \cite{grunwald2019safe}), let us see what  \E variables Wald's test corresponds to.
The setting is restricted to size-2 blocks with $n_a = n_b = 1$. We measure effect size with $d$ the log-odds ratio (\ref{eq:logodds}) and consider an alternative with a $d(\theta_a,\theta_b)$ that is at least some given $\delta$. Using that, for all $(\theta_a,\theta_b) \in (0,1)^2$, $z \in \{0,1,2\}$, the conditional probability mass function $p_{\theta_a,\theta_b} (Y_a, Y_b \mid \sum Y_a +Y_b=z)$ only depends on the log-odds ratio, we can write it, as $q_{\delta}(y_a,y_b |z)$ where $q_{\delta}$ is a probability mass function whose definition depends on $(\theta_a,\theta_b)$ only via $\delta = d( (\theta_a,\theta_b))$.
We then take as our \E variable $S_{\textsc{cond},\delta}:= q_{\delta}(Y_a,Y_b \mid Y_a + Y_b)/q_0(Y_a,Y_b \mid Y_a + Y_b)$. Since the conditional distribution $q_0(Y_a, Y_b \mid Z)$ is the same for all distributions in the null, this conditional likelihood gives an \E variable and can be used instead of our generic \E variable. Since for this Bernoulli case, our \E variable is in fact GRO, we would expect this new conditional \E variable to perform worse in terms of GRO (and for the reasons given in Section~\ref{sec:basics} also in terms of the amount of data needed before one can reject at a desired power), and experiments (not reported here) 
confirm that it indeed performs slightly worse for $\delta$ close to $0$, and substantially worse for larger $\delta$. This is already suggested by the fact that, unlike the GRO \E variable,  $S_{\textsc{cond},\delta}$ takes on value $1$ whenever $y_a = y_b$, effectively ignoring data blocks in which both outcomes are the same. Another disadvantage is that it can only be used in combination with effect size given by the odds ratio or any monotonic transformation thereof; whereas the GRO \E variable can also be combined with the difference $\theta_b - \theta_a$ or any other desirable notion of effect size.  
\section{Conclusion}
We have established \E variables and  test martingales for the general two-i.i.d.-data streams problem. We have demonstrated, using theory, simulations and a real-world example that, for tests of two proportions, by choosing an appropriate prior on $\Theta_1$, the method can be made competitive with classical methods that do not allow for optional stopping. 
Whereas in this paper, we have focused on testing, our  \E variables can also be extended to get {\em anytime-valid confidence sequences} \citep{howard2018uniform,lai1976confidence}, i.e. confidence sequences for effect sizes that are valid even under optional stopping. This requires us to first extend the testing to scenarios with $\delta \geq \delta_1$ vs. $\delta \leq \delta_0$ for $\delta_0 \neq 0$, that is, null hypotheses with $\theta_a \neq \theta_b$. We will report on this extension elsewhere. Our work also suggests a question for future work that is practically relevant, easy to state but hard to answer:  to what extent do our findings generalize to logistic regression? 
\section*{Acknowledgements}
The authors gratefully acknowledge Reuben Adams, Rianne de Heide, Wouter Koolen, Muriel Perez, Judith ter Schure and Akshay Balsubramani for useful conversations and in particular Adams and De Heide for performing experiments that inspired the \E variables presented here. This work is part of the Enabling Personalized Interventions (EPI) project, which is supported by the Dutch Research Council (NWO) in the Commit2 - Data –Data2Person program under contract 628.011.028. Declarations of interest: none.

\section*{Supplementary material}
\begin{itemize}
\item Appendix \ref{app:numerical}: detailed description of numerical approach to calculating \E variables for restricted \hmc{1}\\
\item Appendix \ref{app:gd}: detailed description of Gunel-Dickey Bayes factors \\ 
\item Appendix \ref{app:experiments}: optional stopping experiments \\
\end{itemize}

\bibliography{references}
\section*{Appendix: Proofs}
The proofs below repeatedly use Theorem 1 of \cite{grunwald2019safe} and a direct corollary (called Corollary 2 by \cite{grunwald2019safe}), which we re-state here, for convenience, combined as a single statement.
We use the notation adopted later in the paper: for  $\cH_0 = \{P_{\theta}: \theta \in \Theta_0\}$ and, for $W$ a distribution on $\Theta_0$, we write $P_W = \int P_{\theta} d W(\theta)$.
\paragraph{Theorem (Theorem 1 of \cite{grunwald2019safe})} 
Let $Y$ be a random variable taking values in a set ${\cal Y}$. Suppose $Q$ is a probability distribution for $Y$ with density $q$ that is strictly positive on all of ${\cal Y}$ and let $\cH_0 = \{ P_{\theta}: \theta \in \Theta_0 \}$  be a set of distributions for $Y$ where each $P_{\theta}$ has density $p_{\theta}$. Let ${\cal W}_0$ be the set all distributions on $\Theta_0$. Assume  $\inf_{W_0 \in {\cal W}_0(\Theta_0)} D(Q  \| P_{W_0}) < \infty$.  Then (a) 
there exists a (potentially sub-) distribution $P^*_0$ with density $p^*_0$ such that 
\begin{equation*}
S^{*} := \frac{q(Y)}{p^*_0(Y)}
\end{equation*} 
is an \E variable ($p^*_0$ is called the {\em Reverse Information Projection (RIPr) of $q$ onto $\{ p_{W}: W \in {\cal W}_0 \}$}
\citep{XLi99,XLiB00,Xgrunwald2019safe}).
Moreover, (b), $S^*$ satisfies 
\begin{align}\label{eq:firstgro}
  \sup_{S \in {\cal E}(\Theta_0)}  {\bf E}_{Y \sim Q}[\log S]
  =  
  {\bf E}_{Y \sim Q}[\log S^*] =
  \inf_{W_0 \in {\cal W}_0(\Theta_0)}  D(Q \| P_{W_0})
  = D(Q \| P^*_0).  
  \end{align} 
  and is thus the $Q$-GRO \E variable for $Y$. 
If the minimum is achieved by some ${W}^*_0$, i.e.\ $D(Q \| P^*_0) = D(Q \|
  P_{W^*_0})$, then $P^*_0 = P_{W^*_0}$. 
Moreover, (c), if there exists an \E variable $S$ of the form $q(Y)/p_{W_0}(Y)$ for some $W_0 \in {\cal W}_0$ then $W_0$ must achieve the infimum in (\ref{eq:firstgro}) and $S$ must be essentially equal to $S^*$ in the sense that for  all $P \in \cH_0 \cup \{Q \}$, $P(S^* = q(Y)/p_{W_0}(Y)) = 1$. Similarly (d), if there exists a $W^*_0 \in {\cal W}_0$ that achieves the infimum in (\ref{eq:firstgro}) then $S = q(Y)/p_{W^*_0}(Y)$ is an \E variable and $S$ is again essentially equal to $S^*$.  
\subsection{Proof of Propositions}
\paragraph{Proof of Proposition \ref{prop:proper}} 
Below we state and prove a slight generalization of Proposition~\ref{prop:proper}. 
\begin{proposition}\label{prop:properb}
Let $\cH_1 = \{Q\}$ be a singleton and let $\cH_0 = \{P_{\theta}: \theta \in \Theta_0 \}$ be such that for some distribution $W$ on $\Theta_0$, $D(Q \| P_{W}) < \infty$.  
For general $\theta \in \Theta_0$ and distributions $W$ on $\Theta_0$, define $S_{\theta, (j)} := q(Y_{(j)})/p_{\theta}(Y_{(j)})$ and $S_{W,(j)} = 
q(Y_{(j)})/p_{W}(Y_{(j)}).$
We have: 
\begin{enumerate}
    \item Suppose there exists a distribution $W$ on $\Theta_0$ such that $S_{W, (1)}$ is an \E variable. Then  
$S_{W,(1)}$ is the $Q$-GRO \E variable for $Y_{(1)}$. In particular, if $W$ puts mass $1$ on a particular $\theta^{\circ} \in \Theta_0$, then $S_{W,(1)} = S_{\theta^{\circ},(1)}$ is the $Q$-GRO \E variable. 
\item If $\Theta_0 = \{\theta_0\}$ is simple then, with the prior $W_0$ putting mass 1 on $\theta_0$, $S_{W_0,(1)}= S_{\theta_0,(1)}$ is an \E variable and hence, by the above, also the $Q$-GRO \E variable.  
\item If, for some $\theta^{\circ} \in \Theta_0$, $S_{\theta^{\circ}, (1)}$ is an \E variable and we further assume that $Y_{(1)}, Y_{(2)}, \ldots$ are i.i.d. according to all distributions in $\cH_0 \cup \cH_1$, then $S^{(m)}_{\gro(Q)} = \prod_{j=1}^m S_{\theta^{\circ}, (j)}$; that is, the $Q$-GRO optimal (unconditional) \E variable for $Y^{(m)}$ is the product of the individual $Q$-GRO optimal \E variables. 
\end{enumerate}
\end{proposition}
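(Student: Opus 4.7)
My plan is to handle the three parts in sequence, with part 1 carrying the main information-theoretic content and parts 2 and 3 reducing to it.

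For part 1, the plan is to compare any candidate $\cH_0$-\E variable $S$ directly against $S_{W,(1)}$. First I would observe that, because $W$ is finitely supported on $\Theta_0$, linearity of expectation gives $\mathbf{E}_{P_W}[S] = \sum_i w_i\, \mathbf{E}_{P_{\theta_i}}[S] \leq 1$, even though $P_W$ itself need not lie in $\cH_0$. Hence $p'(y) := S(y)\, p_W(y)$ is a sub-probability density with respect to the common dominating measure $\mu$; write $C := \int p'\,d\mu \leq 1$. Then
\[
\mathbf{E}_Q\!\left[\log S_{W,(1)} - \log S\right]
= \mathbf{E}_Q\!\left[\log\frac{q(Y_{(1)})}{p'(Y_{(1)})}\right]
= D\!\left(Q \,\Vert\, p'/C\right) - \log C \;\geq\; 0,
\]
which is the $Q$-GRO inequality; the two nonnegativity contributions ($D\geq 0$ and $-\log C \geq 0$) come from Jensen/Gibbs and from $C\leq 1$ respectively.

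For part 2, the plan is to invoke the reverse information projection for a convex, weakly compact $\cH_0$. Weak compactness together with lower semicontinuity of $P\mapsto D(Q\,\Vert\,P)$ guarantees the existence of a minimizer $P_{\theta^\circ}\in\cH_0$. I would then turn optimality into the \E-variable property through the standard perturbation trick: for $P\in\cH_0$ and $\lambda\in(0,1]$, the convex combination $P_\lambda := (1-\lambda)P_{\theta^\circ} + \lambda P$ remains in $\cH_0$, so $D(Q\,\Vert\,P_\lambda)\geq D(Q\,\Vert\,P_{\theta^\circ})$; differentiating in $\lambda$ at $0^+$ and invoking dominated convergence yields $\mathbf{E}_P[q/p_{\theta^\circ}]\leq 1$ for every $P\in\cH_0$, so $S_{\theta^\circ,(1)}$ is indeed an \E variable, and Part~1 (with $W$ a point mass at $\theta^\circ$) then delivers $Q$-GRO optimality.

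For part 3, the plan is to lift part 1 to the $m$-fold i.i.d.\ product experiment. Under i.i.d.\ sampling the joint density ratio factorizes as $\prod_{j=1}^{m} S_{\theta^\circ,(j)} = q^{\otimes m}(Y^{(m)})/p_{\theta^\circ}^{\otimes m}(Y^{(m)})$. First I verify that this product is an \E variable for the product model $\{P_\theta^{\otimes m}:\theta\in\Theta_0\}$, because for any such $\theta$,
\[
\mathbf{E}_{P_\theta^{\otimes m}}\!\Big[\textstyle\prod_{j=1}^{m} S_{\theta^\circ,(j)}\Big]
= \big(\mathbf{E}_{P_\theta}[S_{\theta^\circ,(1)}]\big)^{\!m}\leq 1
\]
by the one-outcome hypothesis. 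Applying part 1 in the $m$-outcome setting with the degenerate prior concentrated at $\theta^\circ$ then identifies the product as the $Q^{\otimes m}$-GRO \E variable, as claimed.

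I expect the main obstacle to be part 2: justifying the interchange of differentiation and integration in the variational step requires a domination argument and genuinely uses weak compactness to keep $D(Q\,\Vert\,P_\lambda)$ finite along the perturbation. Fortunately, for the $2\times 2$ application of Theorem~\ref{simpleSproportions}, part 2 of Proposition~\ref{prop:properb} is not invoked at all---the relevant $\cH_0$ (i.i.d.\ Bernoulli over $n_a+n_b>1$ outcomes) is not convex, but Theorem~\ref{simpleSproportions} exhibits a concrete finitely supported $W$ directly from the stream structure, so its proof reduces to verifying the \E-variable condition for that explicit mixture and then applying part 1.
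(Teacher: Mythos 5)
Your proposal is correct, and for Parts 1 and 2 it takes a genuinely more self-contained route than the paper. The paper's own proof disposes of Part 1 (and the optimality step of Part 3) by citing Theorem 1 of Gr\"unwald et al.~(2019) --- if some finite mixture $W$ makes $S_{W,(1)}$ an \E variable, that \E variable must be $Q$-GRO --- and proves Part 2 by obtaining a KL-minimizer $P_{\theta^{\circ}}$ from weak compactness plus lower semicontinuity of KL (Posner, 1975) and then again invoking the cited theorem to conclude that the reverse information projection yields an \E variable. You instead reprove exactly the special cases needed: for Part 1, finite support and linearity give $\mathbf{E}_{P_W}[S]\leq 1$ for every $\cH_0$-\E variable $S$, so $S\,p_W$ is a sub-probability density and Gibbs' inequality yields $\mathbf{E}_Q[\log S_{W,(1)}-\log S]\geq 0$; for Part 2, the same existence step followed by the first-order perturbation $P_\lambda=(1-\lambda)P_{\theta^{\circ}}+\lambda P$, whose derivative at $\lambda=0^+$ gives $\int q\,p/p_{\theta^{\circ}}\,d\mu\leq 1$, which is precisely the \E-variable property --- essentially the classical Li--Barron argument underlying the theorem the paper cites. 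Part 3 is structurally identical in both proofs: verify that the product is an \E variable for the product null and reapply the one-block optimality statement to the block $Y^{(m)}$. Your route buys transparency and independence from the external theorem; the paper's buys brevity and offloads the technicalities you flag (interchange of differentiation and expectation, possibly infinite KL) to the cited work. Two small remarks: both arguments implicitly assume $\inf_{\theta\in\Theta_0}D(Q\|P_\theta)<\infty$, and your aside that weak compactness is what keeps $D(Q\|P_\lambda)$ finite along the perturbation is slightly misplaced --- compactness only secures existence of the minimizer, while finiteness along the segment follows from $p_\lambda\geq(1-\lambda)p_{\theta^{\circ}}$ once $D(Q\|P_{\theta^{\circ}})<\infty$ (and the monotone difference quotient of the concave map $\lambda\mapsto\log p_\lambda$ gives the limit interchange).
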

\begin{proof}
{\em Part 1\/} The theorem above, part (b), implies, with $Y = Y_{(1)}$, that some  $Q$-GRO \E variable $S^*$ for $Y_{(1)}$ exists. Part (c) then implies that 
we can take $S^*$ to be equal to $S_{W, (1)}$. This implies the statement. 

{\em Part 2\/} is immediate. 
{\em Part 3\/}
We assume that  $S_{\theta^{\circ},(1)}$ is an \E-variable. Then  the i.i.d. assumption implies that  $S_{\theta^{\circ}}^{(m)} := \prod_{j=1}^m S_{\theta^{\circ},(j)} = \prod q(Y_{(j)})/p_{\theta^{\circ}}(Y_{(j)})$ is also an \E variable. But \citep[Theorem 1]{Xgrunwald2019safe}, part (c) as stated above implies (by taking a distribution $W$ putting mass $1$ on $\theta$) that for $\cH_0$ for which data are i.i.d., for each $m\geq 1$, that  if a $\theta \in \Theta_0$ exists such that $S_{\theta}^{(m)}$ is an \E variable, then $S_{\theta}^{(m)}$ must be the $Q$-GRO \E variable for $Y^{(m)}$. This proves the  statement.
\end{proof}
\paragraph{Proof of Proposition \ref{prop:beta}}
The formulae for $\breve\theta_a |Y^{(j-1)}$ and $\breve\theta_b |Y^{(j-1)}$
are standard expressions for the Bayes predictive distribution based on the given beta priors; we omit further details. As to the expression for $\breve\theta_0|Y^{(j-1)}$ in terms of $\kappa = n_b / n_a$: 
Straightforward rewriting gives, for general $\alpha_a,\alpha_b, \beta_a,\beta_b$:
\begin{equation}\breve\theta_0 |Y^{(j-1)} = \frac{1}{1 + \kappa}\breve\theta_a |Y^{(j-1)} 
+ \frac{\kappa}{1 + \kappa} \breve\theta_b |Y^{(j-1)}.\end{equation} 
If we plug in the expressions for $\breve\theta_a |Y^{(j-1)}, \breve\theta_b |Y^{(j-1)}$ and we instantiate to  
$
\alpha_b = \kappa \alpha_a \text{, and } \beta_b = \kappa \beta_a,
$
this becomes
\begin{align*}
    \breve\theta_0 |Y^{(j-1)}  &= \frac{1}{1 + \kappa}\frac{U_a + \alpha_a}{n_a(j-1) + \alpha_a + \beta_a} + \frac{\kappa}{1 + \kappa} \frac{U_b + \alpha_b}{\kappa(n_a (j-1) + \alpha_a + \beta_a)} \\
    &= \frac{1}{1 + \kappa}\frac{U_a + U_b + (1 + \kappa) \alpha_a}{n_a (j-1) + \alpha_a + \beta_a} 
    = \frac{U + (1 + \kappa) \alpha_a}{n (j-1) + (1+\kappa)\alpha_a + (1+\kappa)\beta_a},
\end{align*}
which is what we had to prove. 
\subsection{Proof of Theorem \ref{simpleSproportions}}
We first restate  Theorem \ref{simpleSproportions} in its extended version that holds for $k\geq 2$ data streams. Let $\vec{n}= (n_1,\ldots, n_k), n = \sum_{g=1}^k n_g, \vec{\theta}= (\theta_a,\ldots, \theta_k) \in \Theta^k$ and $\vec{y}^n$ be as above (\ref{eq:evarkextension}). We use `$\vec{Y}^n \sim P_{\theta^*}$'  as an abbreviation for ` $Y_1^{n_1} \sim P_{\theta^*_1};
\ldots ; Y_k^{n_k} \sim P_{\theta^*_k}
$'.
\setcounter{theorem}{0}
\begin{theorem}
\label{simpleSproportionsAppendix}
Let 
\begin{equation*}
s(\vec{y}^n; \vec{n}, \vec{\theta}^*) 
\coloneqq \prod_{g = 1}^k
\frac{p_{\theta_{g}^*}( y^{n_g}_g) }{
\prod_{i=1}^{n_g} 
 \left( \sum_{g' = 1}^k \frac{n_{g'}}{n} p_{\theta^*_{g'}} (y_{i,g}) \right)} .
 \end{equation*}
The  random variable $S_{[\vec{n}, \vec{\theta}^*]} :=  s(\vec{Y}^n; \vec{n}, \vec{\theta}^*)$ is an \E variable, i.e. 
we have: 
\begin{equation*}
\sup_{\theta \in \Theta} {\bf E}_{V^n \sim  P_{\theta}}\left[
s( V^n ; \vec{n}, \vec{\theta}^*)
\right] \leq 1.
\end{equation*}
Moreover, if $\{ P_{\theta}: \theta \in \Theta \}$  is a  convex set of distributions, then $S_{[\vec{n}, \vec{\theta}^*]}$ is the $(\vec{\theta}^*)$-GRO \E variable: for any non-negative function $s'$ on $\cY^{n}$ satisfying
$\sup_{\theta \in \Theta} {\bf E}_{V^n \sim  P_{\theta}}\left[
s'( V^n)\right] \leq 1$, we have:
\begin{equation*}
{\bf E}_{\vec{Y}^n \sim P_{\theta^*}}[ \log 
s(\vec{Y}^n; \vec{n}, \vec{\theta}^*) ] \geq 
{\bf E}_{\vec{Y}^n \sim P_{\theta^*}}[ \log s'(\vec{Y}^n)].
\end{equation*}
\end{theorem}
\paragraph{Proof of Theorem \ref{simpleSproportionsAppendix}} 
The following fact plays a central role in the proof:
\paragraph{Fact} For $g \in (1, ..., k)$, let $n_g \in \naturals, n := \sum_{g = 1}^k n_g$ and let $u_g \in \reals^+$. Suppose that $\sum_{g = 1}^k n_g u_g \leq n$.
Then $\prod_{g = 1}^k u_g^{n_g} \leq 1$. \\ \ \\ \noindent
This result follows  from the following standard generalization of Young's inequality to $k$ numbers: for any $k$ numbers $u_1,\ldots, u_k \in \reals^+_0$ and any $k$ nonnegative numbers $p_1,\ldots, p_k$ with $\sum_{g=1}^k p_g = 1$, we have  $\prod_{g=1}^k u_g^{p_g} \leq \sum_{g=1}^k p_g u_g$. Applying this with $p_g=n_g/n$ to $u_g$ and $n_g$ as above, we get   $\prod_{g=1}^k u_g^{n_g/n} \leq \sum_{g=1}^k (n_g u_g)/n \leq 1$, and the result follows by exponentiating to the power $n$.\\
{\em Part 1\/}
For $y \in \cY$, set set $p^{\circ}(y):= \sum_{g = 1}^k (n_g/n) p_{\theta^*_g}(y)$ and $p^{\circ}(y^m) = \prod_{i=1}^m p^{\circ}(y_i)$. For all $\theta \in \Theta$ we have:
\begin{align}\label{eq:kvoorschieten}
  &  {\bf E}_{V^n \sim P_{\theta}}\left[
  s( V^n ; \vec{n}, \vec{\theta}^*)
  \right] = 
\prod_{g = 1}^k {\bf E}_{
  Y_g^{n_g} \sim P_\theta
  }\left[
  \frac{p_{\theta_{g}^*}( Y^{n_g}_g) }{
p^{\circ}(Y^{n_g}_g)}
  \right] = \prod_{g = 1}^k \left({\bf E}_{Y \sim  P_{\theta}}\left[
  \frac{p_{\theta_{g}^*}( Y) }{
p^{\circ}(Y)}
  \right] \right)^{n_g}.
\end{align}
We also have
\begin{align}\label{eq:kinkoppen}
& \sum_{g = 1}^k \frac{n_g}{n} {\bf E}_{Y \sim  P_{\theta}}\left[
  \frac{p_{\theta_{g}^*}( Y) }{
p^{\circ}(Y)}
  \right] =  {\bf E}_{Y \sim  P_{\theta}}\left[ \sum_{g = 1}^k \frac{n_g}{n} \cdot 
  \frac{p_{\theta_{g}^*}( Y) }{ \sum_{g' = 1}^k
\frac{n_{g'}}{n} p_{\theta^*_{g'}}(Y)
}  \right] 
  = 1.
\end{align}
The result now follows by combining (\ref{eq:kvoorschieten}) with (\ref{eq:kinkoppen}) using the Fact further above.  \\
{\em Part 2 \/}
By convexity of $\{P_{\theta}: \theta \in \Theta \}$, there exists $\theta^{\circ} \in \Theta$ such that   $p_{\theta^{\circ}} = \sum_{g = 1}^k (n_g/n) p_{\theta^*_g}$ 
and then the numerator in (2.2) can we rewritten as $p_{\theta^{\circ}} (\vec{y})$. 
The GRO-property is now an immediate consequence of Proposition~\ref{prop:properb}, Part 1.

\newpage
\begin{appendices}
\appendixpage
\renewcommand{\thesection}{S\arabic{section}}
\renewcommand\thefigure{\thesection.\arabic{figure}}  
\section{Numerical approach to calculating \E variables for restricted \hmc{1}}\label{app:numerical}
In this subsection we describe how we propose to approximate the beta prior and posterior on the restricted \hmc{1} with parameter space $\Theta(\delta)$, as defined in (4.1). Note that we limit ourselves to $\delta > 0$ in this detailed description; for $\delta < 0$ one can apply an entirely equivalent approach, with an extra term in the reparameterization. We define
\begin{equation*}
    \zeta = 
    \begin{cases}
    \delta \text{ if } d((\theta_a, \theta_b)) = \theta_b - \theta_a, \\
    0 \text{ if } d((\theta_a, \theta_b)) = \text{log-odds-ratio}(\theta_a, \theta_b),\\
    \end{cases}
\end{equation*}
such that we have $\theta_a \in (0, 1 - \zeta)$ and in both cases, $\theta_b$ is completely determined by $\theta_a$: $\theta_b = d^{-1}(\delta; \theta_a)$. Hence, our density estimation problem now becomes one-dimensional, which enables us to put a discretized prior on the restricted parameter space. 

First, we discretize the parameter space $\Theta_a$ to a grid (a vector) with precision $K, K \in (0, 1 - \zeta)$ and $1/K \in \mathbb{N}^+$: $\bm{\bar{\theta}_a} = \left(K, 2K, 3K, \ldots, 1 - \zeta\right)$. Then, we reparameterize $\theta_a = (1 - \zeta)\rho$, with $\rho \in (0,1)$. Then, we have 
$\bm{\bar{\rho}} = \left(K/(1 - \zeta), 2K/(1 - \zeta), \ldots, 1\right).$
For the discretized grid $\bm{\bar{\rho}}$, we compute the prior $W = \text{Beta}(\alpha, \beta)$ densities and normalize them, which also gives us the discretized densities for each $\theta_a^i \in \bm{\bar{\theta}_a}$ (with $i \in (1, 2, \dots, 1/K)$):
$$\pi_{\alpha, \beta, \zeta}(\theta_a^i) = \frac{\text{Beta}(\frac{\theta_a^i}{1 - \zeta}; \alpha, \beta)}{\sum_{k = 1}^{\frac{1}{K}}\text{Beta}(\frac{\theta_a^k}{1 - \zeta}; \alpha, \beta)}.$$
For all elements of $\bm{\bar{\theta}_a}$, the corresponding $\theta_b$ is retrieved and the likelihood of incoming data points $p_{\theta_a, \theta_b}(Y^{(j-1)})$ is calculated. We can then estimate the posterior density of $\theta_a^i \in \bm{\bar{\theta}_a}$:
$$p(\theta_a^i | Y^{(j-1)}) = \frac{\pi_{\alpha, \beta, \zeta}(\theta_a^i) p_{\theta_a^i, \theta_b^i}(Y^{(j-1)})}{\sum_{k = 1}^{\frac{1}{K}} \pi_{\alpha, \beta, \zeta}(\theta_a^k)p_{\theta_a^k, \theta_b^k}(Y^{(j-1)})}.$$
We can then estimate $\breve\theta_a | Y^{(j-1)} = {\bf E}_{\theta_a \sim W \mid Y^{(j-1)}}[\theta_a]$ as
$\sum_{i = 1}^{\frac{1}{K}} p(\theta_a^i | Y^{(j-1)}) \theta_a^i,$ and $\breve\theta_b | Y^{(j-1)} = d^{-1}(\delta; \theta_a | Y^{(j-1)})$.
\newpage
\section{The Gunel-Dickey Bayes Factors do not give rise to \E-variables}
\label{app:gd}
\setcounter{figure}{0}   
\begin{table}[ht]
    \centering
    \begin{tabular}{l|l|l}
         Sampling scheme & Fixed parameters & Bayes factor (10) for 2x2 table\\
         \hline
        Poisson & none & $\frac{8(n+1)(n_1 + 1)}{(n+4)(n+2)} \left[\frac{n_{a1}! n_{b1}! n_{a0}! n_{b0}! n!}{(n_1 + 1)! n_0! n_a! n_b!}\right]$\\
        Joint multinomial & n & $\frac{6(n+1)(n_1 + 1)}{(n+3)(n+2)} \left[\frac{n_{a1}! n_{b1}! n_{a0}! n_{b0}! n!}{(n_1 + 1)! n_0! n_a! n_b!}\right]$\\
        Independent multinomial & $n_a$, $n_b$ & $\frac{{n \choose  n_1}}{{n_a \choose n_{a1}} {n_b \choose n_{b1}}} \frac{(n + 1)}{(n_a + 1) (n_b + 1)}$\\
        Hypergeometric & $n_a$, $n_b$, $n_1$ & $\frac{n_{a1}! n_{b1}! n_{a0}! n_{b0}! n!}{\prod_{i \in \{a,b, 0, 1\}} (n_i + \mathbb{I} {n_i = min(n_a, n_b, n_0, n_1)})!}$\\
    \end{tabular}
    \caption{Overview of (objective) Bayes factors for contingency table testing provided by \cite{XDickey1974} and \cite{Xjamil2017default}.
    \label{tab:my_label}}
\end{table}
We will not consider the hypergeometric and joint multinomial scenarios for this paper, where the number of successes $n_1$ is fixed, as they do not match the block-wise data design in this paper. The Bayes factor for the Poisson sampling scheme is not an \E variable, as the expectation under the null hypothesis with Poisson distributions on individual cell counts exceeds $1$ for rates $\lambda \geq 1$:

\begin{align*}
    & \mathbb{E}_{n_{rc} \sim \text{Poisson}(\lambda_{rc})} \left[ BF_{10}(N_{a1}, N_{b1}, N_{a0}, N_{b0})\right] = \\
    & \sum_{n_{a1} = 0}^{\infty} \ldots \sum_{n_{b0} = 0}^{\infty} \pi_{\lambda_{a1}}(n_{a1}) \ldots \pi_{\lambda_{b0}}(n_{b0})BF_{10}(n_{a1}, n_{b1}, n_{a0}, n_{b0}) = \\
    & \frac{8}{\exp({\lambda_{a1} + \ldots + \lambda_{b0}})} 
    \sum_{n_{a1} = 0}^{\infty} \ldots \sum_{n_{b0} = 0}^{\infty}
    \lambda_{a1}^{n_{a1}} \ldots \lambda_{b0}^{n_{b0}}\frac{(n+1)(n_1 + 1)}{(n+4)(n+2)} \frac{n!}{(n_1 + 1)! n_0! n_a! n_b!},
\end{align*}
as illustrated numerically in Figure \ref{fig:poissonGD} for increasing limits for the sums $\sum_{n_{rc} = 1}^{\max n_{rc}}$.
\begin{figure}[H]
    \centering
    \begin{subfigure}{6cm}
      \centering
      \includegraphics[width=\textwidth]{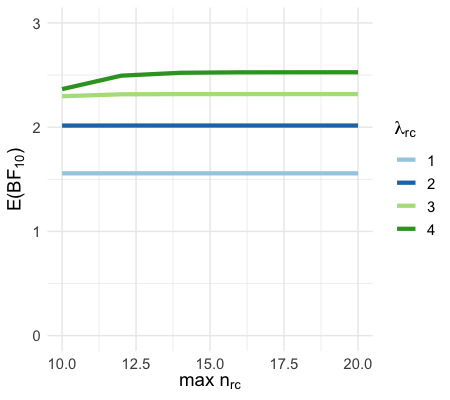}
      \caption{The Gunel-Dickey Bayes factor for the Poisson sampling scheme is not an \E variable: $\sum_{n_{a1} = 0}^{\max n_{rc}} \ldots \sum_{n_{b0} = 0}^{\max n_{rc}} \pi_{\lambda_{a1}}(n_{a1}) \ldots \pi_{\lambda_{b0}}(n_{b0})BF_{10}(n_{a1}, n_{b1}, n_{a0}, n_{b0})$ for various $\max n_{rc}$ and $\lambda_{rc}$.}
    \end{subfigure}
    \hfill
    \begin{subfigure}{6cm}
      \centering
      \includegraphics[width=\textwidth]{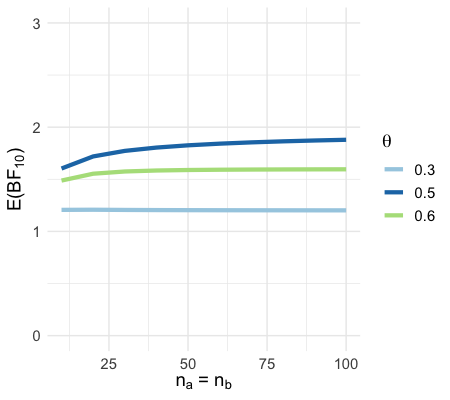}
          \caption{The Gunel-Dickey Bayes factor for the independent multinominal sampling scheme is not an \E variable: $\mathbb{E}_{N_{a1}, N_{b1} \sim \text{Binomial}(\theta)} \left[ BF_{10}(N_{a1}, N_{b1} | n_a, n_b) \right]$ for various choices of $\theta$ and $n_g$.}
    \end{subfigure}
    \caption{GD }
    \label{fig:poissonGD}
\end{figure}

For the independent multinomial sampling scheme, let, without loss of generality, $n_a < n_b$. We get, with $n_0 = n - n_1$,
\begin{align*}
    & \mathbb{E}_{N_{a1}, N_{b1} \sim \text{Binomial}(\theta)} \left[ BF_{10}(N_{a1}, N_{b1} | n_a, n_b) \right] = \\
    & \sum_{n_{a1} = 0}^{n_a} \sum_{n_{b1} = 0}^{n_b} 
    {n_a \choose n_{a1}}  
    {n_b \choose n_{b1}} \theta^{n_1} (1 - \theta)^{n_0}
    \frac{{n \choose  n_1}}{{n_a \choose n_{a1}} {n_b \choose n_{b1}}} \frac{(n + 1)}{(n_a + 1) (n_b + 1)} = \\
    & \frac{(n + 1)}{(n_a + 1) (n_b + 1)} \sum_{n_{a1} = 0}^{n_a} \sum_{n_{b1} = 0}^{n_b} 
    {n \choose  n_1} 
    \theta^{n_1} (1 - \theta)^{n_0} 
\end{align*}
Numerical simulations show that, for a range of choices for $n, n_a$ and $\theta$ this exceeds 1; see Figure \ref{fig:poissonGD}.

\newpage
\section{Type-I error guarantee under optional stopping}
\setcounter{figure}{0}   
\label{app:experiments}
\paragraph{Type-I Error}\label{sec:typeIexperiment}
In Figure \ref{fig:type1Error} type-I error rates of several \E variables and Fisher's exact test estimated through a simulation experiment are depicted. $2000$ samples of length $1000$ were drawn according to a Bernoulli$(0.1)$ distribution to represent $1000$ data streams in two groups. After each complete block $m \in \{1, \dots, 1000\}$ an \E value or p-value was calculated and the proportion of rejected experiments up until $m$ with each test type was recorded. As the stream lengths increase, the type-I error rate under (incorrectly applied) optional stopping with Fisher's exact test increases quickly. The type-I error rate of the \E variables remains bounded.

\begin{figure}[ht]
    \centering
    \includegraphics[width = 11cm]{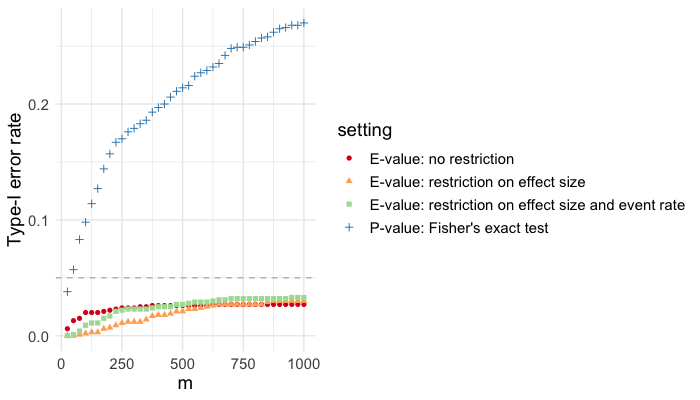}
    \caption{Type-I error rates for various \E variables and Fisher's exact test under optional stopping estimated with $1000$ simulations of two Bernoulli$(0.1)$ data streams of length $1000$, with $n_a = n_b = 1$. Significance level $\alpha = 0.05$ was used (grey dashed line). For the safe tests, beta prior parameter values used were $\gamma= \alpha_a = \beta_a = \alpha_b = \beta_b = 1/2$ ($\gamma = 0.18$ gave comparable results). For the \E variables with restrictions on \hmc{1}, we used $\delta = 0.05$ and $\theta_a = 0.1$.}
    \label{fig:type1Error}
\end{figure}

\end{appendices}

\end{document}